\newtheorem{thm}{Theorem}[section]
\newtheorem{prop}[thm]{Proposition}
\newtheorem{lem}[thm]{Lemma}
\newtheorem{defi}[thm]{Definition}
\newtheorem{cor}[thm]{Corollary}
\newtheorem{claim}{Claim}
\def\B{\,\square \,}
\def\Z{\mathbb Z}
\journal{European J. Combin.} 
\begin{document}

\begin{frontmatter}

\title{There are no finite partial cubes of girth more than 6 and minimum
degree at least 3}

\author[LJ]{Tilen Marc}
				\ead{tilen.marc@imfm.si}
				
\address[LJ]{Institute of Mathematics, Physics, and Mechanics, Jadranska 19, 1000 Ljubljana, Slovenia}

\begin{abstract}
Partial cubes are graphs isometrically embeddable into hypercubes. We analyze how isometric cycles in  partial cubes behave and derive that every partial cube of girth more than 6 must have vertices of degree less than 3. As a direct corollary we get that every regular partial cube of girth more than 6 is an even cycle. Along the way we prove that every partial cube $G$ with girth more than 6 is a tree-zone graph and therefore $2n(G)-m(G)-i(G)+ce(G)=2$ holds, where $i(G)$ is the isometric dimension of $G$ and $ce(G)$ its convex excess.
\end{abstract}
\begin{keyword}
partial cubes \sep girth of a graph \sep isometric cycles \sep regular graphs
\end{keyword}
\end{frontmatter}

\sloppy
\section{Introduction}

Graphs that can be isometrically embedded into hypercubes are called partial cubes. They form a well known class of graphs which inherits many structural properties from hypercubes. For this reason, they were introduced by Graham and Pollack \cite{graham1971addressing} as a model for interconnection networks and latter found different applications, for examples see \cite{BandeltEvolu, eppstein2007media, NadjafiKlavzar}. There has been much theory developed about partial cubes, we direct an interested reader to books  \cite{deza1997geometry,Hammack:2011a} and the survey \cite{ovchinnikov2008partial}. For recent results in the field, see \cite{Albenque2016866,BresarSumenjak,cardinal2014covering,Gologranc14,weizhang}.

Probably the best known subfamily of partial cubes are median graphs \cite{bandelt1984retracts,Hammack:2011a, klmu}. Many questions that are currently open for partial cubes, are long answered for median graphs. Comparing to median graphs, we can learn a lot about partial cubes and even predict certain properties. An example of this is the topic of classifying regular graphs in each class. It was Mulder \cite{mulder1980n} who already in 1980 showed that hypercubes are the only finite regular median graphs; this result has been in some instances generalized also to infinite graphs \cite{bandelt1983infinite,imrich2009transitive,marc2014regular,marc2015vertex}. On the other hand, it seems very difficult to find (non-median) regular partial cubes (particularly in the cubic case),  extensive studies have been made in \cite{bonnington2003cubic, brevsar2004cubic,eppstein2006cubic, klavvzar2007tribes}. In fact, all known cubic partial cubes are planar, besides  the Desargues graph \cite{klavvzar2007tribes}. One of the motivations for this article is to find out why this is so.

One of the most important differences between partial cubes and median graphs is hidden in the cycles of these graphs, particularly in the behavior of isometric and convex cycles. The convex closure of an isomeric cycle in a median graph is a hypercube (for a proof and a generalization to a larger subclass of partial cubes, see \cite{polat2007netlike}). This implies that median graphs that are not trees have girth four, which is far from true in partial cubes. It is an interesting fact that all the known examples of regular  partial cubes have girth four, with the exception of even cycles  and the  middle level graphs (which have girth 6). This motivates the analysis of partial cubes of higher girths.

A motivation for the study of partial cubes with high minimum degree comes from the theory of oriented matroids. Every oriented matroid is characterized by its tope graph, formed by its  maximal covectors \cite{MR1744046}. It is a well known fact that tope graphs are partial cubes, while there is no good characterization of partial cubes that are tope graphs \cite{MR1210100}. It follows from basic properties of oriented matroids that the minimum degree of a tope graph is at least the rank of the oriented matroid it describes. Since the tope graphs of oriented matroids with rank at most 3 are characterized \cite{MR1210100}, there is a special interest in graphs with high minimum degree.

Klavžar and Shpectorov \cite{klavvzar2012convex} proved a certain “Euler-type” formula for partial cubes, concerning convex cycles. Moreover, they defined  the zone graphs of a partial cube: graphs  that emerge if we consider how convex cycles in a partial cube intersect. The latter gave motivation to analyze the space of isometric cycles in partial cubes.

The main contribution of this paper is a theorem which shows that there are no finite partial cubes of girth more than 6 and minimum
degree at least 3. This helps to understand why it is difficult to find regular partial cubes, since it implies that, besides even cycles, there are none with girth more than 6.   To prove the theorem we introduce two concepts - a traverse of isometric cycles and intertwining of isometric cycles - and show some properties of them. We hope that these two definitions will give a new perspective on partial cubes.

In the rest of this section basic definitions and results needed are given. We will consider only simple (possibly infinite) graphs in this paper. The \emph{Cartesian product} $G\, \square \, H$ of graphs $G$ and $H$ is the graph with the vertex set $V(G) \times V(H)$ and the edge set consisting of all pairs $\{(g_1,h_1),(g_2,h_2)\}$ of vertices with $\{g_1,g_2\} \in E(G)$ and $h_1=h_2$, or $g_1=g_2$ and $\{h_1,h_2\} \in E(H)$. \emph{Hypercubes} or \emph{$n$-cubes} are Cartesian products of $n$-copies of $K_2$. We say a subgraph $H$ of $G$ is \emph{convex} if for every pair of vertices in $H$ also every shortest path connecting them is in $H$. On the other hand, a subgraph is \emph{isometric} if for every pair of vertices in $H$ also some shortest path connecting them is in $H$. A \emph{partial cube} is a graph that is isomorphic to an isometric subgraph of some hypercube.

For a graph $G$, we define the relation $\Theta$ on the edges of $G$ as follows: $ab \Theta xy$ if $d(a,x) + d(b,y)\neq d(a,y) + d(b,x)$, where $d$ is the shortest path distance function. In partial cubes $\Theta$ is an equivalence relation (in fact a bipartite graph is a partial cube if and only if $\Theta$ is an equivalence relation \cite{winkler1984isometric}), and we write $F_{uv}$ for the set of all edges that are in relation $\Theta$ with $uv$. We define $W_{uv}$ as the subgraph induced by all vertices that are closer to vertex $u$ than to $v$, that is $W_{uv}=\langle \{ w: \   d(u,w) < d(v, w) \} \rangle $. In a partial cube $G$, subgraphs $W_{uv}$ are convex, and the sets $V(W_{uv})$ and $V(W_{vu})$ partition $V(G)$, with  $F_{uv}$ being the set of edges joining them. We define $U_{uv}$ to be the subgraph induced by the set of vertices in $W_{uv}$ which have a neighbor in $W_{vu}$. For  details and further results, see \cite{Hammack:2011a}.

We shall need a few simple results about partial cubes. It $u_1v_1 \Theta u_2v_2$ with $u_2 \in U_{u_1v_1}$, then $d(u_1,u_2)=d(v_1,v_2)=d(u_1,v_2)-1=d(u_2,v_1)-1$. A path $P$ is  a shortest path or a \emph{geodesic} if and only if it has all of its edges in pairwise different $\Theta$ classes. For fixed $u,v$ all shortest $u,v$-paths pass the same $\Theta$-classes of $G$. If $C$ is a cycle and $e$ an edge on $C$, then there is another edge on $C$ in relation $\Theta$ with $e$. We denote with $I(a,b)$ the \emph{interval} from vertex $a$ to vertex $b$, i.e.~the induced subgraph on all the vertices that lie on some shortest $a,b$-path. In a partial cube, for every vertices $a$ and $b$, the subgraph $I(a,b)$ is convex.  For the details, we again refer to \cite{Hammack:2011a}.

In \cite{klavvzar2012convex}, the following definition was given: Let $G$ be a partial cube and $F$ be some equivalence class of relation $\Theta$. The \emph{$F$-zone graph}, denoted with $Z_F$, is the graph with $V(Z_F) = F$, vertices $f$ and $f'$ being adjacent in $Z_F$ if they belong to a common convex cycle of $G$. We call a partial cube whose all zone graphs are trees a \emph{tree-zone} partial cube.

For a graph $G$, we shall denote with $g(G)$ the girth of $G$, i.e.~the length of a shortest cycle in $G$. In this paper we will consider, beside finite, also infinite, locally finite (every vertex has at most finitely many neighbors) graphs. For such graphs the following definition makes sense: If $d\in \mathbb{N}$, let $B_d(v)$ be the number of vertices at distance at most $d$ from a vertex $v$ of a graph $G$. If $B_d(v)$ is bounded from below by some exponential function in $d$, we say that $G$ has an \emph{exponential growth}. The definition is independent of the choice of the vertex in $G$.

\section{Results}

We start with a definition that we will use throughout the rest of the paper.

\begin{defi}
Let $v_1u_1\Theta v_2u_2$ in a partial cube $G$, with $v_2 \in U_{v_1u_1}$. Let $C_1,\ldots, C_n$, $n\geq 1$, be a sequence of isometric cycles such that $v_1u_1$ lies only on $C_1$, $v_2u_2$ lies only on $C_n$, and each pair $C_i$ and $C_{i+1}$, for $i\in \{1,\ldots,n-1\}$, intersects in exactly one edge and this edge is in $F_{v_1u_1}$, all the other pairs do not intersect. If the shortest path from $v_1$ to $v_2$ on the union of $C_1,\ldots, C_n$ is a shortest $v_1, v_2$-path in $G$, then we call $C_1,\ldots, C_n$ a \emph{traverse} from $v_1u_1$ to $v_2u_2$.
\end{defi}

Every isometric cycle in a partial cube has its antipodal edges in relation $\Theta$. Using this fact, we see that if $C_1,\ldots, C_n$ is a traverse from $v_1u_1$ to $v_2u_2$, then also the shortest path from $u_1$ to $u_2$ on the union of $C_1,\ldots, C_n$ is isometric in $G$, since it must have all its edges in different $\Theta$-classes. We will call this $u_1,u_2$-shortest path the \emph{$u_1,u_2$-side of the traverse}  and, similarly, the shortest $v_1,v_2$-path on the union of $C_1,\ldots, C_n$ the \emph{$v_1,v_2$-side of the traverse}. The  length of these two shortest paths is the \emph{length of the traverse}. If all isometric cycles on a traverse $T$ are convex cycles, we will call $T$ a \emph{convex traverse}.

The next lemma is inspired by results from \cite{klavvzar2012convex}.

\begin{lem}\label{lem:cycles}
Let $v_1u_1\Theta v_2u_2$ in a partial cube $G$. Then there exists a convex traverse from $v_1u_1$ to $v_2u_2$.
\end{lem}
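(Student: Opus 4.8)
The plan is to argue by induction on the distance $k = d(v_1, v_2)$. Using the quoted metric identity (applied with $v_1,u_1,v_2,u_2$ in the roles of $u_1,v_1,u_2,v_2$) we have $d(v_1, v_2) = d(u_1, u_2) = k$ and $d(v_1, u_2) = d(v_2, u_1) = k+1$. Two distinct $\Theta$-related edges cannot share a vertex, since for adjacent edges $xy,xz$ one has $d(y,z)=2$ and hence $d(x,x)+d(y,z)=2=d(x,z)+d(y,x)$, so $xy\,\Theta\,xz$ fails; thus $v_1u_1$ and $v_2u_2$ are vertex-disjoint and $k \geq 1$. For the base case $k=1$ the vertices $v_1,v_2,u_2,u_1$ induce a $4$-cycle, because $u_1u_2$ is an edge (as $d(u_1,u_2)=1$) while $v_1u_2$ and $v_2u_1$ are non-edges (their endpoints are at distance $2$); this induced $4$-cycle is convex, and taken alone it is the required convex traverse with $n=1$.

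For the inductive step let $k \geq 2$ and choose a neighbour $a$ of $v_2$ on a geodesic from $v_2$ to $v_1$, so $d(v_1,a)=k-1$ and $a \in W_{v_1u_1}$. Then $d(a,u_2) \geq |d(v_1,a)-d(v_1,u_2)| = 2$, while the walk $a,v_2,u_2$ has length $2$; hence $a,v_2,u_2$ is a geodesic and its two edges lie in distinct $\Theta$-classes $F' := F_{av_2}$ and $F := F_{v_1u_1}$. I would now produce a convex cycle $C$ through $v_2u_2$ whose second $F$-edge is strictly closer to $v_1$. The guiding picture is that $v_2u_2$ is one of the two antipodal $F$-edges of $C$, that the two sides of $C$ are geodesics — one starting $v_2,a,\ldots$ toward $v_1$, the other running parallel to it from $u_2$ — and that, $C$ being convex, its two sides pass through the same sequence of $\Theta$-classes and close up at an antipodal $F$-edge $v_2'u_2'$. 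I would locate $C$ inside the convex interval $I(v_1,u_2)$, which contains both $v_1u_1$ and $v_2u_2$ via the geodesics $v_1\cdots v_2u_2$ and $v_1u_1\cdots u_2$. When $a$ and $u_2$ have a second common neighbour, $C$ is already a convex $4$-cycle and $v_2'u_2'$ is the $F$-edge matching $v_2u_2$ across $F'$; in general $C$ may be longer. Granting such a $C$, its far $F$-edge satisfies $d(v_1,v_2') < k$ with $v_2' \in U_{v_1u_1}$, and the geodesic $v_1\cdots v_2'$ followed by the $v$-side of $C$ is a geodesic to $v_2$. By the induction hypothesis applied to $v_1u_1\,\Theta\,v_2'u_2'$ there is a convex traverse $C_1,\ldots,C_{n-1}$ from $v_1u_1$ to $v_2'u_2'$, and I set $C_n := C$.

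It then remains to verify that $C_1,\ldots,C_n$ is genuinely a traverse: that $C_n$ meets $C_{n-1}$ in exactly the single edge $v_2'u_2' \in F$ and is disjoint from $C_1,\ldots,C_{n-2}$, that $v_1u_1$ still lies only on $C_1$, and that the concatenated $v_1,v_2$-side remains a geodesic in $G$ (the last being immediate, since its edges lie in pairwise distinct $\Theta$-classes). The tools I would marshal to control these disjointness conditions are convexity of $I(v_1,u_2)$ and of the halfspaces $W$, the characterization of geodesics by their $\Theta$-classes, and the antipodal-edge property of isometric cycles.

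I expect the main obstacle to be precisely the construction of the convex cycle $C$ together with the proof that it progresses toward $v_1$ and creates no spurious intersections. The even cycles $C_{2m}$ show that $C$ can be forced to be long, with no $4$-cycle available, so one cannot simply close up a square at the corner $a,v_2,u_2$; one must instead exhibit the full convex cycle and rule out that it re-enters the region already covered by the inductively obtained traverse. Controlling this global non-intersection, rather than the purely local step, is where I anticipate the real difficulty.
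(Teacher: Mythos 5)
Your proposal is a plan rather than a proof: the step on which everything hinges --- producing a convex cycle $C$ through $v_2u_2$ whose opposite $F_{v_1u_1}$-edge $v_2'u_2'$ is strictly closer to $v_1$, with the $v$-side of $C$ extending a geodesic toward $v_1$ --- is never constructed, and you say yourself that this is where you anticipate the real difficulty. That step cannot be deferred as a verification: exhibiting even one convex cycle through a prescribed edge of a partial cube is essentially the content of the lemma, and on top of existence you need directional control, since a convex cycle through $v_2u_2$ containing a second $F_{v_1u_1}$-edge could a priori ``go the wrong way,'' i.e.\ its other $F$-edge need not lie on a geodesic from $v_2$ toward $v_1$. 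The subsequent disjointness of $C$ from the inductively obtained traverse is likewise only announced. So the base case and the inductive bookkeeping you do carry out are correct, but the core is missing.

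It is instructive to compare with how the paper fills exactly this hole, because it never constructs a single convex cycle directly. It takes a counterexample pair $v_1u_1$, $v_2u_2$ at minimal distance, forms the cycle $C$ from a geodesic $R_1$ joining $u_1,u_2$, a geodesic $R_2$ joining $v_1,v_2$, and the two $\Theta$-edges, and establishes a dichotomy: either $C$ is convex, in which case $C$ alone is the desired traverse, or there is a shortcut path $S$ meeting $C$ only in its endpoints. In the latter case, since $F_{v_1u_1}$ is a cut, $S$ contains an edge $v'u'\in F_{v_1u_1}$; since the endpoints of $S$ lie in the convex interval $I(v_1,u_2)$, the edge $v'u'$ lies on a shortest $v_1,u_2$-path, which forces $d(v_1,v_2)=d(v_1,v')+d(v',v_2)$, i.e.\ the crossing point sits metrically \emph{between} the two edges. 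Minimality then supplies convex traverses from $v_1u_1$ to $v'u'$ and from $v'u'$ to $v_2u_2$, and a distance count (every vertex of a traverse lies on one of its two geodesic sides, so each distance value from $v_1u_1$ is attained exactly twice) shows the two traverses meet only in $v'$ and $u'$, so they concatenate --- a contradiction. The convexity of $I(v_1,u_2)$ is precisely what certifies the ``progress toward $v_1$'' that your sketch postulates; to rescue your peeling strategy you would need this same interval/halfspace argument, at which point you are reconstructing the paper's proof in a more awkward order.
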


\begin{proof}
Assume that this is not the case and let $v_1u_1$ and $v_2u_2$ be counterexample edges with distance between them as small as possible. Since $G$ is connected, there is a shortest $u_1,u_2$-path $R_1$ and a shortest $v_1,v_2$-path $R_2$. We claim that the cycle $C$ on $u_1v_1$, $R_2$, $v_2u_2$, and  $R_1$ is convex.

First we will prove that there is no path  connecting vertices $r_1\in V(R_1)$ and $r_2\in V(R_2)$ that is incident with $C$ only in its endpoints and is shorter or of the same length  as a shortest $r_1,r_2$-path on $C$. For the sake of contradiction, assume that such a path $S$ exists.

Since $F_{v_1u_1}$ is a cut, there is an edge $v'u'$ on $S$ that is in $F_{v_1u_1}$. Moreover, since $ I(v_1,u_2)$ is a convex subgraph and $r_1, r_2 \in I(v_1,u_2)$, the edge $v'u'$ lies on some shortest $v_1,u_2$-path. Thus it holds $d(v_1,v_2)+1=d(v_1,u_2)=d(v_1,v')+1+d(u',u_2)$ which implies $d(v_1,v_2)=d(v_1,v')+d(v',v_2)$.

Distance between $v_1u_1$ and $v'u'$ is smaller than distance between $v_1u_1$ and  $v_2u_2$.
Therefore there exists a convex traverse from $v_1u_1$ to $v'u'$ and, similarly, a convex traverse from $v'u'$ to $v_2u_2$. We argue that the union of both traverses is a convex traverse from $v_1u_1$ to $v_2u_2$. The shortest $v_1,v_2$-path on the union of traverses is of length at most $d(v_1,v')+d(v',v_2)$, which by the above paragraph equals $d(v_1,v_2)$. The latter implies that all the vertices on the traverse from $v'u'$ to $v_2u_2$ are at distance at least $d(v_1,v')$ from $v_1u_1$, while vertices on the  traverse from $v_1u_1$ to $v'u'$ are at distance at most $d(v_1,v')$ from $v_1u_1$. The only vertices on both traverses that are at distance $d(v_1,v')$ from $v_1u_1$ are $v'$ and $u'$. Thus the convex cycles on both traverses have the right intersections to form a convex traverse. A contradiction with the assumption that a convex traverse does not exist.

Now we can prove that $C$ is convex. We have already proved that every pair $r_1\in V(R_1)$ and $r_2\in V(R_2)$ is connected on $C$ with a shortest path. Since every pair of vertices on $R_1$ or a pair on $R_2$ is connected by a shortest path by definition, the cycle $C$ must be isometric. Assume that there is a path $S$ connecting $r_1,r_2\in V(C)$ that has the same length as a shortest path on $C$ and has only its endpoints on $C$. We have proved that $S$ cannot have its endpoints on $R_1$ and $R_2$, thus $r_1$ and $r_2$ must be both in $R_1$ or both in $R_2$. Without loss of generality, assume that they are in $R_1$. Then there exists a shortest $u_1,u_2$-path $R_1'$, different from $R_1$. Now the same arguments that prove that $C$ is isometric also prove that the cycle $C'$ on $u_1v_1$, $R_2$, $v_2u_2$, and  $R_1'$ is isometric. Isometric cycles $C$ and $C'$ cannot simultaneously exist since antipodal edges in an isometric cycle are in relation $\Theta$, while no vertex can be incident with two edges in the same $\Theta$-class.

We have proved that $C$ is convex. This is a contradiction with the assumption of the existence of edges without a convex traverse.

%
%
\end{proof} 

The next lemma turns out to be extremely useful when working with isometric cycles in a partial cube.

\begin{lem}\label{lem:pastecycle}
Let $P=u_0u_1\ldots u_m$ be a geodesic in a partial cube. If there is some other shortest $u_0,u_m$-path, then there exists a convex cycle  of the form $(u_{i}u_{i+1}\ldots u_{j}w_{j-1}w_{j-2}\ldots w_{i+1} )$ for some $0\leq i<j-1 \leq m-1$ and $j-i-1$ vertices $w_{i+1},\ldots ,w_{j-1}$ not on $P$.
\end{lem}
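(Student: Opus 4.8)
The plan is to first extract, from a second geodesic, a cycle of exactly the required shape, and then to upgrade it to a convex one.

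I would fix a second shortest $u_0,u_m$-path $Q=w_0w_1\ldots w_m$ with $Q\neq P$ and locate the first place where $Q$ leaves $P$: let $i$ be the smallest index with $w_{i+1}\neq u_{i+1}$, so that $w_0\ldots w_i=u_0\ldots u_i$. Since $w_m=u_m$ lies on $P$, there is a smallest index $s>i$ with $w_s\in V(P)$. Because $P$ and $Q$ are geodesics, $d(u_0,w_s)=s$ while $d(u_0,u_j)=j$ for the index $j$ with $w_s=u_j$, which forces $s=j$ and $w_j=u_j$; moreover the subpath $w_i\ldots w_j$ of $Q$ is a shortest $u_i,u_j$-path of the same length $j-i$ as the subpath $u_i\ldots u_j$ of $P$. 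By minimality of $s$ the interior vertices $w_{i+1},\ldots,w_{j-1}$ avoid $P$, and $w_{i+1}\neq u_{i+1}$ gives $j-i\geq 2$. Writing $A=u_i\ldots u_j$ and $R=w_iw_{i+1}\ldots w_j$, the cycle $D=A\cup R=(u_iu_{i+1}\ldots u_jw_{j-1}\ldots w_{i+1})$ is then exactly of the stated form, with $0\le i<j-1\le m-1$. This settles everything except convexity: $D$ is built from two internally disjoint geodesics and in general is only isometric, not convex (two ``opposite'' geodesics of a $3$-cube already give a non-convex $6$-cycle).

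To force convexity I would make the choice extremal: among all triples $(i,j,R)$ arising in this way---a subpath $u_i\ldots u_j$ of $P$ together with an internally disjoint $u_i,u_j$-geodesic $R$ whose interior misses $P$---I pick one with $j-i$ minimal, and claim the corresponding $D$ is convex. It is harmless to replace $G$ by the convex subgraph $I(u_i,u_j)$, in which one checks that $P$ meets the graph in exactly the arc $A$. Then I would argue by contradiction, in the style of the proof of Lemma~\ref{lem:cycles}: if $D$ is not convex there is a shortest path $S$ joining two vertices $x,y\in V(D)$, internally disjoint from $D$, of length at most the $D$-distance between $x$ and $y$. If both $x,y$ lie on $A$, then $S$ is a shortest $u_a,u_b$-path internally off $A$, producing a strictly smaller admissible configuration and contradicting minimality.

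The main obstacle is the mixed case, where one endpoint of $S$ lies on $A$ and the other on the off-path arc $R$ (together with the related case where $S$ is a third $u_i,u_j$-geodesic); here the shortcut does not, by itself, return a shorter subpath of $P$. To close it I would splice $S$ with the appropriate portion of $R$ and exploit the $\Theta$-relation between antipodal edges of the cycle, together with the convexity of the intervals $I(\cdot,\cdot)$, to again produce a shorter admissible pair. An alternative route that sidesteps part of this analysis is to feed the edge $u_iu_{i+1}$ and its unique $\Theta$-mate on $R$ into Lemma~\ref{lem:cycles}: the resulting convex traverse contains a convex cycle through $u_iu_{i+1}$, and since a convex cycle always meets the geodesic $P$ in a subpath, minimality of $j-i$ should pin this cycle down to the desired half-on-$P$ shape.
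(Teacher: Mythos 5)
Your first step is sound: taking a second geodesic, cutting at the first divergence and first return, you correctly obtain a cycle $D=A\cup R$ formed by two internally disjoint $u_i,u_j$-geodesics of length $j-i\geq 2$ whose off-$P$ arc avoids $P$, and you correctly identify that the entire difficulty is convexity. But neither of your two endgames closes that gap. In the extremal-configuration route, the case analysis is incomplete in an essential way: besides the ``mixed'' case you concede is open, you omit the case where both endpoints of the shortcut $S$ lie on $R$ (then neither $S$ nor $R[x,y]$ is anchored on $P$, so no admissible configuration arises and minimality of $j-i$ says nothing), and in the case where $S$ is a third $u_i,u_j$-geodesic the parameter is again $j-i$, so minimality cannot be contradicted. ``Splice $S$ with a portion of $R$ and exploit $\Theta$'' is a statement of intent, not an argument; it is exactly the hard part of the lemma.

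The alternative route fails concretely. Lemma \ref{lem:cycles} gives \emph{some} convex traverse from $u_iu_{i+1}$ to its $\Theta$-mate on $R$, and its cycle through $u_iu_{i+1}$ does meet $P$ in a subpath; but nothing forces that subpath to be half of the cycle, and when it is shorter than half, the complementary arc is not a geodesic, hence yields no admissible configuration for your minimality to bite on. Already in $Q_3$ this breaks: take $P=000,100,110,111$ and $R=000,001,011,111$, so $e=\{000,100\}$ has mate $f=\{011,111\}$ on $R$; the traverse with sides $000,001,011$ and $100,101,111$ consists of the convex $4$-cycles $(000,100,101,001)$ and $(001,101,111,011)$, and its cycle through $e$ meets $P$ in the single edge $e$ only. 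What is missing is the paper's key device: a minimal-counterexample induction on the length of $P$ itself. The paper takes the $\Theta$-mate $u_{k-1}u_k$ \emph{on $P$} of the first edge $u_0z_1$ of the other geodesic, a convex traverse between these two edges, and then the dichotomy: either the subpath $u_0\ldots u_{k-1}$ of $P$ is a side of that traverse, in which case the \emph{last} cycle of the traverse is of the required shape by construction (its side lies on $P$ and it closes with the edge $u_{k-1}u_k\in F_{u_0z_1}$, so exactly half of it is on $P$); or it is not, and then $u_0\ldots u_{k-1}$ is a strictly shorter geodesic admitting a second geodesic, so the induction hypothesis applies and the cycle it produces sits on a subpath of $P$. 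Your proposal has no analogue of this recursion on shorter subpaths of $P$, which is precisely why both of your approaches stall on the cases you flag.
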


\begin{proof}

Assume that this is not the case and let $P=u_0u_1\ldots u_m$ and $P'$ be two different $u_0,u_m$-geodesics for which the lemma does not apply. Without loss of generality, assume that the length of $P$ is minimal among all counterexamples of the lemma.

By the minimality assumption, the paths $P$ and $P'$ intersect only in $u_0$ and $u_m$. Denote the vertices of $P'$ with $u_0z_1z_2\ldots z_{m-1}u_m$ and let $C$ be the cycle formed by $P$ and $P'$.

Beside $u_{0}z_{1}$ itself, there must an additional edge on $C'$, that is in relation $\Theta$ with $u_{0}z_{1}$. Since $P'$ is a geodesic, this edge is on $P$.
Let $u_{k-1}u_{k}\in F_{u_{0}z_{1}}$, for some $0<k\leq m$. By Lemma \ref{lem:cycles}, there is a convex traverse from $u_{0}z_{1}$ to $u_{k-1}u_{k}$. First,  assume that the  path $P''=u_{0}u_{1}\ldots u_{k-1}$ is the $u_{0},u_{k-1}$-side of it. Then the last convex cycle on this traverse is of the form $(u_{k'} u_{k'+1}\ldots u_{k-1} u_{k}w_{k-1}\ldots w_{k'+1})$ for some $0\leq k'\leq k-2$ and some vertices $w_{k'+1},\ldots ,w_{k-1}$ not on $P$ (they do not lie on $P$ since the cycle is convex). We have found the desired cycle.

On the other hand, assume that $P''$ is not the $u_{0},u_{k-1}$-side of a traverse from $u_{0}z_{1}$ to $u_{k-1}u_{k}$. The geodesic $P''$ is shorter than $P$, and there exists another shortest $u_{0},u_{k-1}$-path, namely the $u_{0},u_{k-1}$-side of a traverse from $u_{0}z_{1}$ to $u_{k-1}u_{k}$. By the minimality assumption, there exists a convex cycle of the form $(u_{i}u_{i+1}\ldots u_{j} w_{j-1}w_{j-2}\ldots w_{i+1})$ for some $0\leq i<j-1 \leq k-2$ and some vertices $w_{i+1},\ldots ,w_{j-1}$ not on $P''$. Since $P''$ is a subpath of $P$, we have again found a cycle from the assertion of the lemma.
\end{proof}

\begin{cor}\label{cor:2pos}
Let $u_0v_0 \Theta u_mv_m$ hold in a partial cube. If $P=u_0u_1\ldots u_m$ is a geodesic, then $P$ is the $u_0,u_m$-side of a convex traverse from $v_0u_0$ to  $v_mu_m$ or there is a convex cycle of the form $(u_{i}w_{i+1}\ldots w_{j-1} u_{j}u_{j-1}\ldots u_{i+1})$ for some $0\leq i<j-1 \leq m-1$ and $j-i-1$ vertices $w_{i+1},\ldots ,w_{j-1}$ not on $P$.
\end{cor}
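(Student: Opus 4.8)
\emph{Proof proposal.} The plan is to derive the corollary as a direct marriage of the two preceding lemmas, organised around the dichotomy ``$P$ is, or is not, the unique shortest $u_0,u_m$-path''. The observation that makes everything fit is that the convex cycle produced by Lemma~\ref{lem:pastecycle}, namely $(u_iu_{i+1}\ldots u_jw_{j-1}\ldots w_{i+1})$, is literally the cycle required in the second alternative: reading the same cyclic word backwards starting from $u_i$ turns it into $(u_iw_{i+1}\ldots w_{j-1}u_ju_{j-1}\ldots u_{i+1})$, with the $w$'s off $P$ exactly as demanded.

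Suppose first that $P$ is \emph{not} the only shortest $u_0,u_m$-path. Then Lemma~\ref{lem:pastecycle} applies verbatim and gives a convex cycle of the above form, which is the second alternative, and I am done. Now suppose $P$ \emph{is} the unique shortest $u_0,u_m$-path. I first claim $u_m\in W_{u_0v_0}$, i.e.\ that $u_m$ lies on the same side of $F=F_{u_0v_0}$ as $u_0$. Indeed, if instead $u_m\in W_{v_0u_0}$, then $v_m\in W_{u_0v_0}$ is a neighbour of $u_m$ on $u_0$'s side, so $d(u_0,v_m)=d(u_0,u_m)-1$ and a $u_0,v_m$-geodesic followed by the edge $v_mu_m$ is a second shortest $u_0,u_m$-path: its last edge lies in $F$, whereas $P$'s does not, unless $P$ crosses $F$ in its very last edge, in which case one reroutes through $u_0v_0$ at the start instead. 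Either way this contradicts uniqueness. Hence $u_m\in W_{u_0v_0}$, so $v_m\in W_{v_0u_0}$ has the neighbour $u_m$ across $F$ and therefore $v_m\in U_{v_0u_0}$.

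With $v_m\in U_{v_0u_0}$ the traverse ``from $v_0u_0$ to $v_mu_m$'' is defined, and Lemma~\ref{lem:cycles} furnishes a convex one, say $T$. By the remark following the definition of a traverse, the $u_0,u_m$-side of $T$ is a shortest $u_0,u_m$-path, and by the assumed uniqueness it must equal $P$; thus $P$ is the $u_0,u_m$-side of the convex traverse $T$, which is the first alternative. The only genuine work is the orientation bookkeeping of the middle paragraph --- certifying that a reversed placement of $u_mv_m$ forces a second geodesic and so cannot coexist with uniqueness --- together with the cosmetic check that the cyclic word output of Lemma~\ref{lem:pastecycle} coincides with the prescribed form; all the structural content is carried by Lemmas~\ref{lem:cycles} and~\ref{lem:pastecycle}.
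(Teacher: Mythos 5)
Your proposal is correct and takes essentially the same route as the paper's proof: both combine Lemma~\ref{lem:cycles} (which supplies a convex traverse whose $u_0,u_m$-side is a shortest $u_0,u_m$-path) with Lemma~\ref{lem:pastecycle} (which converts the existence of a second shortest $u_0,u_m$-path into the required convex cycle), the only difference being that you split on whether $P$ is the unique geodesic while the paper splits on whether $P$ is a traverse side --- two phrasings of the same dichotomy. The orientation bookkeeping you include (showing $u_m\in W_{u_0v_0}$, hence $v_m\in U_{v_0u_0}$, in the uniqueness case) is a detail the paper leaves implicit in its labeling convention, so it is added care rather than a different argument.
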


\begin{proof}
If $P$ is not the $u_0,u_m$-side of a convex traverse from $v_0u_0$ to $v_mu_m$, then we have another shortest $u_0,u_m$-path different from $P$, namely the $u_0,u_m$-side of a convex traverse provided by Lemma \ref{lem:cycles}. Thus, by the previous lemma, the corollary follows.
\end{proof}

In the following we will work with isometric cycles that intersect pairwise in more than a vertex or an edge. We will be particularly interested in the following type of intersections.

\begin{defi}
Let $C_1=(v_0v_1\ldots v_mv_{m+1}\ldots v_{2m+2n_1-1})$ and $C_2=(u_0u_1\ldots u_mu_{m+1}\ldots u_{2m+2n_2-1})$ be isometric cycles with $u_0=v_0,\ldots, u_m=v_m$ for $m\geq 2$, and all the other vertices pairwise different. Then we say that $C_1$ and $C_2$ \emph{intertwine} and define $i(C_1,C_2)=n_1+n_2\geq 0$ as the \emph{residue of intertwining}.
\end{defi}

 Notice that we can calculate the residue of intertwining as $i(C_1,C_2)=(l_1+l_2-4m)/2$, where $l_1$ is the length of $C_1$, $l_2$ the length of $C_2$, and $m$ the number of edges in the intersection. Also notice that in a partial cube, $m$ can be at most half of $l_1$ or $l_2$. Let us prove the latter:  If $m>l_1/2$, then the fact that antipodal edges in an isometric cycle are in relation $\Theta$ implies that $C_1$ is determined by the intersection. Moreover, the path in the intersection is not isometric, thus it must cover more than half of $C_2$, i.e.~$m>l_2/2$. Thus also $C_2$ is determined by the intersection, and consequently we have $C_1=C_2$.
 
%
%

\begin{lem}\label{lem:intersecttointertwine}
Let $G$ be a partial cube and let two isometric cycles intersect in at least two non-adjacent vertices. Then there exist two isometric cycles that intertwine.
\end{lem}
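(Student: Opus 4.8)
The plan is to realize the definition of intertwining directly: to exhibit two isometric cycles whose intersection is a single common path of length at least two and nothing else. I would start from two non-adjacent vertices $a,b$ lying on both given cycles, and among \emph{all} pairs of isometric cycles of $G$ meeting in two non-adjacent vertices I would choose a configuration for which $k:=d(a,b)\ge 2$ is as small as possible. On each of the two cycles the shorter $a,b$-arc is a geodesic of length $k$; call these geodesics $P_1$ and $P_2$. The whole argument then splits according to whether $P_1=P_2$ or $P_1\ne P_2$.

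The case $P_1\ne P_2$ is the one I expect to go through cleanly. Here $a$ and $b$ are joined by two distinct geodesics, so Lemma~\ref{lem:pastecycle} (equivalently Corollary~\ref{cor:2pos}), applied to $P_1$ with the alternative $P_2$, produces a convex, hence isometric, cycle $D$ sharing with the first cycle $C_1$ a subpath of $P_1$ of length at least $2$. The two endpoints of that subpath are non-adjacent common vertices of $D$ and $C_1$ at distance at most $k$, so minimality of $k$ forces them to be $a$ and $b$ themselves; thus $D$ consists of all of $P_1$ together with a second $a,b$-geodesic through new vertices $w_1,\dots,w_{k-1}$. I would then verify that no $w_\ell$ lies on $C_1$: such a vertex would be a common vertex of the two isometric cycles $C_1$ and $D$ lying off $P_1$, and reading off $d(a,w_\ell)$ or $d(b,w_\ell)$ along the geodesic arc of $D$ exhibits a non-adjacent common pair at distance strictly below $k$, contradicting minimality. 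The only borderline subcases ($\ell=1$ or $\ell=k-1$ with $k=2$) collapse because isometric cycles are chordless and no vertex meets two edges of one $\Theta$-class, which forces $D$ to coincide with $C_1$ or with the second cycle; in that degenerate situation the same path is instead shared with the other given cycle. Either way $D$ meets a given cycle in exactly one path of length $\ge 2$ and they intertwine.

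The genuine obstacle is the coinciding-arc case $P_1=P_2=:P$, where the two cycles already share the geodesic $P$ of length $\ge 2$ but may also meet again along their long arcs. If $P$ is the entire intersection the cycles intertwine and we are done, so the difficulty is ruling out, or reducing away, additional components of $C_1\cap C_2$. My plan is to use the primary minimality together with a secondary minimality (among pairs realizing $k$, minimize the total number of edges of the two cycles, or the number of components of the intersection) and to argue that any extra common vertex $z$ on the long arcs is forced to satisfy $d(a,z)=d(b,z)=k$; the shorter $a,z$-arcs of the two cycles then either differ, returning us to the clean case above, or coincide, producing a strictly smaller configuration that contradicts the secondary minimality. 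The tools that make this bookkeeping possible are the standing facts that antipodal edges of an isometric cycle lie in one $\Theta$-class, that each $\Theta$-class meets an isometric cycle exactly twice, and that isometric cycles are chordless; these are precisely what converts ``shares two non-adjacent vertices'' into ``shares a genuine common path'' and blocks spurious re-intersections. The crux, and where I expect the real work to lie, is this isolation of a single shared path; once it is secured, the definition of intertwining is met verbatim.
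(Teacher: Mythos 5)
Your first case ($P_1\neq P_2$) is essentially sound, and it is worth noting that your disjointness mechanism there differs from the paper's: you force the pasted cycle $D$ to contain all of $P_1$ and to avoid the rest of $C_1$ by a global minimality of $k$, whereas the paper never minimizes over configurations at all. Instead it splits on whether $P_1$ is strictly shorter than, or exactly, half of $C_1$; in the first subcase it observes that no edge of $P_1$ is $\Theta$-related to the two edges $v_0v_1$, $v_mv_{m+1}$ of $C_1$ hanging off the ends of $P_1$, so the pasted cycle lies in $W_{v_1v_0}\cap W_{v_mv_{m+1}}$ while $C_1\setminus P_1$ lies in $W_{v_0v_1}\cup W_{v_{m+1}v_m}$; the second subcase is handled by building an alternative geodesic through $W_{u_2v_1}$ and pasting onto that.

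The genuine gap is in your coinciding-arc case $P_1=P_2$, precisely where you concede ``the real work'' lies. Your minimality of $k$ yields only a \emph{lower} bound $d(x,y)\geq k$ for non-adjacent vertices common to two isometric cycles; it gives no upper bound whatsoever. Hence the pivotal claim that an extra common vertex $z$ is ``forced to satisfy $d(a,z)=d(b,z)=k$'' has no justification: nothing prevents $C_1$ and $C_2$ from sharing the geodesic $P$ near one end and, far away along their long arcs, an additional common vertex $z$ with $d(a,z)$ much larger than $k$. The proposed dichotomy therefore never starts. Worse, even when the two shortest $a,z$-arcs do differ, your clean-case argument does not transfer to them: Lemma \ref{lem:pastecycle} then produces a cycle sharing a subpath whose endpoints are at distance anywhere between $2$ and $d(a,z)$, and minimality of $k$ no longer pins that subpath down to the whole arc. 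A workable repair is to choose the initial pair better rather than to minimize $k$: since isometric cycles are induced subgraphs, $C_1\cap C_2$ is a disjoint union of paths, and non-intertwining (together with the existence of two non-adjacent common vertices) forces at least two components; picking common vertices $v_1,v_m$ in \emph{different} components at minimum distance guarantees that their shortest arcs on the two cycles meet only in $v_1$ and $v_m$, because any internal common vertex would lie in a component distinct from that of $v_1$ or of $v_m$ and would give a closer cross-component pair. This is exactly the content of the paper's opening reduction (``we can assume that $P_1$ \dots intersects \dots $P_2$ only in the endpoints''), after which your hard case never arises.
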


\begin{proof}
Assume that we have two isometric cycles $C_1$ and $C_2$ that intersect in at least two non-adjacent vertices, say $v_1$ and $v_m$. If they do not intertwine, we can assume that that a shortest $v_1,v_m$-path $P_1$ on $C_1$ intersects with a shortest $v_1,v_m$-path $P_2$ on $C_2$ only in the endpoints. Denote the vertices on $P_1$ with $v_1v_2\ldots v_m$, and the vertices on $P_2$ with $v_1u_2u_3\ldots u_{m-1}v_m$. We analyze two cases: First, assume that the length of $P_1$ is strictly less than half of the length of $C_1$. The path $P_1$ is isometric, and since $P_1\neq P_2$, by Lemma \ref{lem:pastecycle}, we have an isometric cycle $C$ of the form $C=(v_{k}w_{k+1}\ldots w_{l-1} v_{l}v_{l-1}\ldots v_{k+1})$ for some $1\leq k<l-1 \leq m-1$ and some vertices $w_{k+1},\ldots ,w_{l-1}$ not on $P_1$. 
Notice that vertices $w_{k+1},\ldots ,w_{l-1}$ do not intersect vertices on $C_1$. To see the latter, let $v_0$ be the neighbor of $v_1$ on $C_1$ different from $v_2$, and $v_{m+1}$ be the neighbor of $v_m$ on $C_1$ different from $v_{m-1}$. Since the length of $C_1$ is strictly greater than the length of $C$, no edge on $P_1$ is in relation $\Theta$ with $v_0v_1$ and $v_mv_{m+1}$. This implies that $V(C)\subset W_{v_1v_0}\cap W_{v_mv_{m+1}}$, while vertices of $C_1$ that do not lie on $P_1$ are  in $W_{v_0v_1}\cup W_{v_{m+1}v_{m}}$. We have proved that  $C_1$ and $C$ intertwine.

Second, assume that the length of $P_1$ is exactly half of the length of $C_1$. There is exactly one edge on $P_1$ that is in relation $\Theta$ with the edge $v_1u_2$ (first edge of $P_2$), say $v_{i-1}v_{i}\Theta v_1u_2$, for some $1< i\leq m$. 
Since the length of $P_1$ is exactly half of the length of $C_1$, the edge $v_mv_{m-1}$ is in relation $\Theta$ with edge $v_0v_1$, where $v_0$ is again the neighbor of $v_1$ on $C$ different from $v_2$. Then $v_mv_{m-1}$ is not in relation $\Theta$ with $v_1u_2$, since no two incident edges can be in relation $\Theta$. Thus $i<m$. Let $P$ be a shortest $v_i,u_2$-path in $W_{u_2v_1}$. The extension of $P$ with an edge $u_2v_1$ is a shortest $v_1,v_i$-path of length less than half of the length of $C_1$ and different from $v_1v_2\ldots v_i$. As before we use Lemma \ref{lem:pastecycle} to obtain an isometric cycle $C$, $C\neq C_1$, of the form $C=(v_{k}w_{k+1}\ldots w_{l-1} v_{l}v_{l-1}\ldots v_{k+1})$, for some $0\leq k<l-1 \leq m-1$ and some vertices $w_{k+1},\ldots ,w_{l-1}$ not on $P_1$. For the same reasons as before, all vertices $w_{k+1},\ldots ,w_{l-1}$ are disjoint with vertices of $C_1$, thus $C_1$ and $C$ intertwine.
\end{proof}

%

For the next result, let $X$ be the graph from Figure 1.

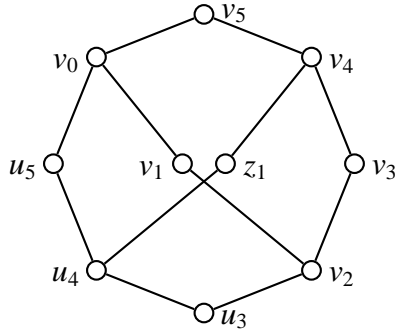
\begin{figure}[h]

\centering
\begin{tikzpicture}[style=thick,scale=2,label distance=0pt,node distance=10pt,rotate=-45]
\tikzstyle{vertex}=[draw, circle, inner sep=0pt, minimum size =7pt]
\node (0) at (0.3  ,0.3) [vertex,label={[shift={(-0.4,-0.5)}]$u_5$}] {};
\node (1) at ( 1  ,0) [vertex,label={[shift={(-0.4,-0.5)}]$u_{4}$ }] {};
\node (2) at (1.7  ,0.3) [vertex,label={[shift={(0.4,-0.5)}]$u_3$}]{};
\node (3) at ( 0  ,1) [vertex,label={[shift={(-0.4,-0.5)}]$v_0$}] {};
\node (4) at (0.9,0.9  ) [vertex,label={[shift={(-0.4,-0.5)}]$v_{1}$ }] {};
\node (5) at ( 2,1) [vertex,label={[shift={(0.4,-0.5)}]$v_{2}$ }] {};
\node (6) at (0.3,1.7) [vertex,label={[shift={(0.4,-0.4)}]$v_{5}$ }] {};
\node (7) at ( 1,2) [vertex,label={[shift={(0.4,-0.5)}]$v_4$}] {};
\node (8) at (1.7  , 1.7) [vertex,label={[shift={(0.4,-0.5)}]$v_{3}$ }] {};
\node (9) at ( 1.1  , 1.1) [vertex,label={[shift={(0.4,-0.5)}]$z_{1}$ }] {};

\draw (0) -- (1);
\draw (1) -- (2);
\draw (0) -- (3);
\draw (3) -- (4);
\draw (4) -- (5);

\draw (2) -- (5);
\draw (3) -- (6);
\draw (5) -- (8);
\draw (6) -- (7);

\draw (7) -- (8);

\draw (1) -- (9);
\draw (9) -- (7);

\end{tikzpicture}
\caption{Graph $X$ with labels}
\end{figure}

\begin{prop}\label{prop:girth2}
If $G$ is a partial cube with $g(G)>6$, then every pair of isometric cycles in $G$ meets in either exactly one edge, or exactly one vertex, or not at all. Moreover, the same holds if $g(G)=6$, provided that $G$ contains no isometric subgraph isomorphic to $X$.
\end{prop}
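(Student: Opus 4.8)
\emph{Approach.} The plan is to reduce the statement to a claim about \emph{intertwining} cycles and then to force a short cycle. First I would record that, since $g(G)\ge 6>3$, the graph is triangle-free, so any two \emph{adjacent} shared vertices force the edge between them into both cycles (an isometric cycle through two vertices at distance $1$ must contain the connecting edge), while any three shared vertices contain a non-adjacent pair. Hence, if two isometric cycles meet in more than a single vertex and not in exactly one edge, they must share two non-adjacent vertices, and Lemma~\ref{lem:intersecttointertwine} then produces two intertwining isometric cycles in $G$. Thus it suffices to prove that $G$ contains no intertwining isometric cycles when $g(G)>6$, and that any intertwining pair forces an isometric copy of $X$ when $g(G)=6$.

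\emph{Setup.} Let $C_1,C_2$ intertwine along the shared geodesic $v_0v_1\ldots v_m$ with $m\ge 2$, and write $l_1=2m+2n_1$, $l_2=2m+2n_2$. I would first discard the residue-zero case $n_1=n_2=0$: there the antipode of $v_0v_1$ in $C_1$ is $v_mv_{m+1}$ and its antipode in $C_2$ is $v_mu_{m+1}$, two distinct edges incident to $v_m$ and lying in the common class $F_{v_0v_1}$, which is impossible. So I may assume $i(C_1,C_2)\ge 1$ and, among all intertwining pairs, choose one of minimum residue and, subject to that, of minimum $m$.

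\emph{Crux and main obstacle.} The delicate point is that $C_1\cup C_2$ by itself need not contain any cycle shorter than $C_1,C_2$ (two octagons sharing a path of length $2$ have all of their own cycles of length $8$), so the short cycle has to come from the geodesics that isometry forces \emph{between} the two far sides $v_mv_{m+1}\ldots v_0$ and $v_mu_{m+1}\ldots v_0$. Concretely, the antipodal edges $a_1$ of $v_0v_1$ in $C_1$ and $a_2$ in $C_2$ lie in the single class $F_{v_0v_1}$; applying Lemma~\ref{lem:cycles} and Corollary~\ref{cor:2pos} to the geodesics joining their endpoints yields a convex traverse crossing $F_{v_0v_1}$. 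Either this traverse is aligned with $C_1$ (or $C_2$), in which case I expect to read off an intertwining pair of strictly smaller residue or smaller $m$, contradicting minimality; or Lemma~\ref{lem:pastecycle} pastes a convex cycle onto a far side, and I expect to show that this pasted cycle has length at most $6$. For $g(G)>6$ such a cycle is absurd, so no intertwining occurs and the proposition follows; for $g(G)=6$ every relevant cycle is a hexagon, and tracking the common neighbour that isometry forces between the two vertices antipodal to the shared path reproduces exactly the three hexagons of $X$.

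The hard part will be precisely this ``filling'' step for arbitrary residue and arbitrary $m$: one must argue that the convex traverse supplied by Lemma~\ref{lem:cycles}, combined with the minimality of the chosen pair, cannot avoid creating a convex cycle of length at most $6$, and that in the girth-$6$ case the forced vertices and edges assemble into an \emph{isometrically} embedded $X$ rather than into some larger or degenerate configuration.
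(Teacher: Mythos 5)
Your opening reduction is sound and matches the paper's: by triangle-freeness two adjacent shared vertices force the connecting edge into both isometric cycles, three or more shared vertices contain a non-adjacent pair, Lemma~\ref{lem:intersecttointertwine} then yields an intertwining pair, and your exclusion of the residue-zero case (two distinct edges of $F_{v_0v_1}$ incident at $v_m$) is a correct, if minor, addition. From that point on, however, the proposal is a plan rather than a proof: the two places where you write ``I expect to read off an intertwining pair of strictly smaller residue'' and ``I expect to show that this pasted cycle has length at most $6$'' are precisely the content of the paper's argument, and your closing paragraph concedes that this ``filling'' step is unresolved. What is missing, concretely, is the quantitative mechanism that makes minimality bite. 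The paper first proves a standalone claim: on any traverse $T$ of length $n\le n_1+n_2$, no isometric cycle can be pasted onto a side of $T$, because such a cycle would intertwine with one of the cycles constituting $T$, and a four-case computation of the residue $(l_1+l_2-4l_3)/2$ (according to how the two intervals of intersection overlap) always gives a value strictly below $n_1+n_2$, contradicting minimality. Without this claim or a substitute, neither branch of your dichotomy produces a contradiction, and nothing in your outline bounds the length of a pasted cycle by $6$.

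The second missing piece is where the short cycle actually comes from. In the paper it does not come from a pasted cycle (those are excluded outright by the claim above); it comes from running two parallel traverses sharing a side: $T$ between $v_{2m+n_1-1}v_{2m+n_1-2}$ and $u_{2m+n_2-1}u_{2m+n_2-2}$, and $T'$ between the antipodal edges $v_{2m+n_1}v_{2m+n_1-1}$ and $u_{2m+n_2}u_{2m+n_2-1}$, and intertwining their first cycles $C'$ and $C''$. The computation $i(C',C'')=k_2-k_1+2\le k_2\le n_1+n_2$, forced to equality by minimality, makes $C'$ a $6$-cycle (an immediate contradiction when $g(G)>6$) and makes $C''$ swallow the whole of $T$; this is what drives the reduction to $n_1=n_2=1$, $m=2$ and, in the girth-$6$ case, to the explicit ten-vertex subgraph, whose isometric embedding must then be verified edge by edge via $\Theta$-classes --- a nontrivial check, not merely the observation that ``the three hexagons of $X$ appear.'' Your proposal points at the right objects and follows the paper's strategy in outline, but every step at which the contradiction is supposed to materialize is deferred, so as it stands it does not prove the proposition.
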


\begin{proof}
Le $G$ be a partial cube with $g(G)\geq 6$.
Assume that the proposition does not hold and let $(v_0v_1\ldots v_mv_{m+1}\ldots v_{2m+2n_1-1})$ and $(u_0u_1\ldots u_mu_{m+1}\ldots u_{2m+2n_2-1})$ be intertwining isometric cycles with $u_0=v_0,\ldots, u_m=v_m$ for $m\geq 2$ and with their intertwining residue  $n_1+n_2$  minimal among all intertwining isometric cycles in $G$. Such cycles exist by Lemma \ref{lem:intersecttointertwine}. To prove the assertion of the proposition, we shall show that $G$ contains an isometric subgraph isomorphic to $X$. First we prove the following claim.


\begin{claim}\label{claim:1}
Let $T$ be a traverse in $G$ of length $n\leq n_1+n_2$ with a side $P_1=z_0z_1 \ldots z_{n}$. Then there is no isometric cycle $C_1$ of the form $C_{1} = (z_kz_{k+1}\ldots z_{l}w_{l-1}w_{l-2}\ldots w_{k+1})$ for some $0\leq k<l-1\leq n-1$ and vertices $w_{l-1},w_{l-2},\ldots, w_{k+1}$ not on $P_1$.
\end{claim}

\begin{proof}



Assume that $T$, $P_1$ and $C_1$ from the statement exist. Let $P_2=s_0s_1\ldots s_n$ be the other side of $T$. Notice that in this notation $T$ is a traverse from $z_0s_0$ to $z_ns_n$. The length of $C_{1}$ is $2(l-k)$, therefore $l-k \geq 3$, since $g(G)\geq 6$. Consider the edge $z_kz_{k+1}$ on $P_1$. Since it is on the side of $T$, it lies in some isometric cycle $D_1$ on $T$. 

First assume that also the edge $z_{k+1}z_{k+2}$ lies on $D_1$, and let $z_{i'}z_{i'+1}\ldots z_{j'}$ be the vertices of $D_1$ on $P_1$ (i.e.~in $W_{z_0s_0}$), and $s_{i'}s_{i'+1}\ldots s_{j'}$ be vertices of $D_1$ on $P_2$ (i.e.~in $W_{s_0z_0}$). Cycles $C_1$ and $D_1$ intertwine: they intersect in the common consecutive vertices on $P_1$ (at least in $z_kz_{k+1}z_{k+2}$ by our assumption and the fact that $l-k \geq 3$), while all the other vertices are disjoint since $s_{i'}, \ldots s_{j'}\in W_{s_0z_0}$ and $C_1$ lies in  $W_{z_0s_0}$.

On the other hand, if $z_{k+1}z_{k+2}$ lies on an isometric cycle $D_2$ on $T$, different from $D_1$, then also $z_{k+2}z_{k+3}$ must lie on $D_2$. If not, then $D_2$ is of the form $(z_{k+1}z_{k+2}s_{k+2}s_{k+1})$ for some $s_{k+1}s_{k+2} \in W_{s_0z_0}$, thus $g(G)=4$, a contradiction. We see that $C_1$ and $D_2$ intertwine: they intersect in the common consecutive vertices on $P_1$ (at least in $z_{k+1}z_{k+2}z_{k+3}$ by our assumption and the fact that $l-k \geq 3$) while all the other vertices are different.

To sum up both cases, there exists an isometric cycle $C_{2}=(z_{i}z_{i+1}\ldots z_{j} s_{j}\ldots s_i)$ on $T$ ($C_2$ is in fact $D_1$ or $D_2$), for some $0\leq i< j-1\leq n-1$ and some vertices $ s_{j},\ldots, s_i\in V(P_2)$, that intertwine with $C_1$.
We have multiple options for the positions of $C_1$ and $C_2$, i.e.~whether $k\leq i < l \leq j$, $k\leq i < j \leq l$, $i\leq k < l \leq j$, or $i\leq k < j \leq l$.

If $k\leq i < l \leq j$, then the assumption that the cycles meet in at least two edges gives $l-i\geq 2$. It also holds $j-k\leq n$, and $n\leq n_1+n_2$ by the choice of $T$.
Now we calculate the residue of intertwining of $C_1$ and $C_2$. It can be calculated by the formula $(l_1+l_2-4l_3)/2$ where $l_1$ is the length of $C_1$, $l_2$ the length of $C_2$, and $l_3$ the length of the intersection. We have $l_1=2(l-k)$, $l_2=2(j-i)+2$, while the length of intersection is $l-i$. Thus it holds:
$$i(C_1,C_2)= (2(l-k)+2(j-i)+2- 4(l-i))/2= 1+(j-k)-(l-i)\leq 1+n_1+n_2- 2<n_1+n_2.$$
The one but last inequality holds by the inequalities in the previous paragraph.
This is a contradiction with the minimality assumption. 

In the case $k\leq i < j \leq l$ (for the same reasons as in the previous case) we have $l-k\leq n_1+n_2$, $j-i\geq 2$, and the length of the intersection is $j-i$. Thus:
$$i(C_1,C_2)= (2(l-k)+2(j-i)+2- 4(j-i))/2= (l-k)-(j-i)+1\leq n_1+n_2- 2+1<n_1+n_2.$$

The two remaining cases are similar. If $i\leq k < l \leq j$,  we have $j-i\leq n_1+n_2$, $l-k\geq 2$, and the length of the intersection is $l-k$. Thus:
$$i(C_1,C_2)= (2(l-k)+2(j-i)+2- 4(l-k))/2= -(l-k)+(j-i)+1\leq -2+n_1+n_2+1<n_1+n_2.$$

Finally, if $i\leq k < j \leq l$  we have $l-i\leq n_1+n_2$, $j-k\geq 2$, and the length of the intersection is $j-k$. Thus:
$$i(C_1,C_2)= (2(l-k)+2(j-i)+2- 4(k-j))/2= (l-i)-(j-k)+1\leq n_1+n_2- 2+1<n_1+n_2.$$

We have obtained a contradiction in all the cases, which proves the claim.
\end{proof}

We now analyze the relation $\Theta$ in cycles $(v_0v_1\ldots v_mv_{m+1}\ldots v_{2m+2n_1-1})$ and $(u_0u_1\ldots u_mu_{m+1}\ldots u_{2m+2n_2-1})$.
We have $v_{2m+n_1}v_{2m+n_1-1} \Theta v_{m}v_{m-1}$ and $u_{2m+n_2}u_{2m+n_2-1} \Theta u_{m}u_{m-1}$. Therefore $v_{2m+n_1}v_{2m+n_1-1}$ is in relation $\Theta$ with  $u_{2m+n_2}u_{2m+n_2-1}$. Similarly, $v_{2m+n_1-1}v_{2m+n_1-2}$ is in relation $\Theta$ with $u_{2m+n_2-1}u_{2m+n_2-2}$.

\begin{figure}[h]

\centering
\begin{tikzpicture}[style=thick,scale=1,label distance=0pt,node distance=0pt]
\tikzstyle{vertex}=[draw, circle, inner sep=0pt, minimum size =7pt]

\node (0) at (0  ,0) [vertex,label={[shift={(-0.0,-0.8)}]$v_0$}] {};
\node (1) at ( 1  ,0) [vertex,label={[shift={(-0.0,-0.8)}]$v_{1}$ }] {};
\node (2) at (2  ,0) [vertex,label={[shift={(-0.0,-0.8)}]$v_{2}$ }] {};
\node (3) at ( 3  ,0) [vertex,label=below left:{ }] {};
\node (4) at (5, 0 ) [vertex,label={[shift={(-0.0,-0.8)}]$v_{m}$ }] {};
\node (5) at ( 5.7,0.7) [vertex,label=below left:{}] {};
\node (6) at (5.7,2.3) [vertex,label=below left:{ }] {};
\node (7) at ( 5,3) [vertex,label={[shift={(0.6,-0.3)}]$v_{m+n_1}$ }] {};
\node (8) at (3  , 3) [vertex,label=above:{ }] {};
\node (9) at ( 2  , 3) [vertex,label={[shift={(0.4,-0.1)}]$v_{2m+n_1-2}$ }] {};
\node (10) at ( 1  , 3) [vertex,label={[shift={(0.05,-0.1)}]$v_{2m+n_1-1}$ }] {};
\node (11) at ( 0  , 3) [vertex,label={[shift={(-0.6,-0.3)}]$v_{2m+n_1}$ }] {};
\node (12) at ( -0.7  , 2.3) [vertex,label=left:{ }] {};
\node (13) at ( -0.7  , 0.7) [vertex,label=left:{ }] {};

\node (14) at ( 5.7,-0.7) [vertex,label=below left:{}] {};
\node (15) at (5.7,-2.3) [vertex,label=below left:{}] {};
\node (16) at ( 5,-3) [vertex,label={[shift={(0.6,-0.7)}]$u_{m+n_2}$ }] {};
\node (17) at (3  , -3) [vertex,label=below:{}] {};
\node (18) at ( 2  , -3) [vertex,label={[shift={(0.4,-0.8)}]$u_{2m+n_2-2}$ }] {};
\node (19) at ( 1  , -3) [vertex,label={[shift={(0.05,-0.8)}]$u_{2m+n_2-1}$ }] {};
\node (20) at ( 0  , -3) [vertex,label={[shift={(-0.5,-0.7)}]$u_{2m+n_2}$ }] {};
\node (21) at ( -0.7  , -2.3) [vertex,label=left:{ }] {};
\node (22) at ( -0.7  , -0.7) [vertex,label=left:{ }] {};

\node (23) at (1  , 1.5) {$P_1$};
\node (24) at ( 1.4  , -2.5) [label=below:{ }] {};
\node (25) at ( 2.7  , 1.5) {$P_2$};
\node (26) at ( 2.4  , -2.5) [label=left:{}] {};

\draw[dashed] (10) to[bend left=15] (19);

\draw[dashed] (1.38,1.9) --(2.2,1.9);
\draw[dashed] (1.58,0.8) --(2.4,0.8);

\draw[dashed] (1.38,-1.9) --(2.2,-1.9);
\draw[dashed] (1.58,-0.8) --(2.4,-0.8);

\draw[dashed] (18) to[bend right=15] (9);

\draw (4) -- (14);
\draw (15) -- (14);
\draw (15) -- (16);
\draw[dashed] (17) -- (16);
\draw (17) -- (18);
\draw (19) -- (18);
\draw (20) -- (21);
\draw (20) -- (19);
\draw[dashed] (21) -- (22);
\draw (22) -- (0);

\draw (0) -- (1);
\draw (1) -- (2);
\draw (2) -- (3);
\draw[dashed] (3) -- (4);
\draw (4) -- (5);
\draw[dashed] (5) -- (6);
\draw (7) -- (6);
\draw[dashed] (8) -- (7);
\draw (8) -- (9);
\draw (10) -- (9);
\draw (10) -- (11);
\draw (11) -- (12);
\draw[dashed] (12) -- (13);
\draw (13) -- (0);

\end{tikzpicture}
\caption{A situation from the proof of Proposition \ref{prop:girth2}}
\end{figure}
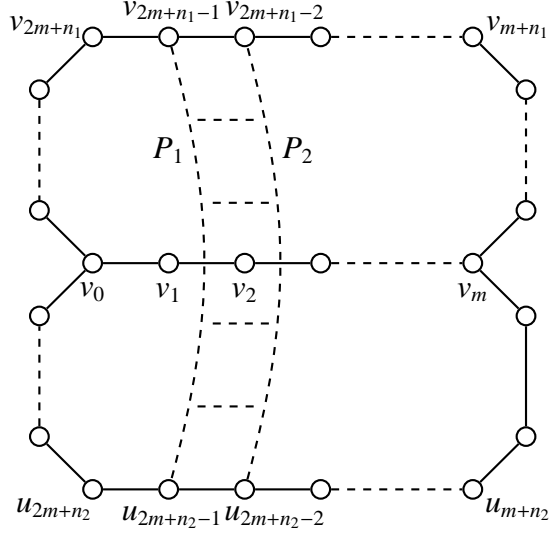

By Lemma \ref{lem:cycles}, there is a traverse $T$ from $v_{2m+n_1-1}v_{2m+n_1-2}$ to $u_{2m+n_2-1}u_{2m+n_2-2}$. Denote with $P_1$ the  $v_{2m+n_1-1},u_{2m+n_2-1}$-side of $T$ and with $P_2$ the $v_{2m+n_1-2},u_{2m+n_2-2}$-side of $T$ (see Figure 2).  
Moreover, let the vertices on $P_1$ be denoted by $z_0,z_1,\ldots, z_{n}$, where $z_0=v_{2m+n_1-1}$, $z_{n}=u_{2m+n_2-1}$, and $n$ is the length of $T$.

We first show that the length $n$ is at most $n_1+n_2$. By definition $n=d(v_{2m+n_1-1},u_{2m+n_2-1})$, and $d(v_{2m+n_1-1},u_{2m+n_2-1})=d(v_{2m+n_1},u_{2m+n_2})$ since  $v_{2m+n_1-1}v_{2m+n_1} \Theta u_{2m+n_2-1}u_{2m+n_2}$. On the other hand, there exists a $v_{2m+n_1},u_{2m+n_2}$-path of length $n_1+n_2$, namely the path $v_{2m+n_1}v_{2m+n_1+1}\ldots v_0u_{2m+2n_2-1} \ldots u_{2m+n_2}$. This proves the inequality.

Considering $P_1$ and edges $v_{2m+n_1}v_{2m+n_1-1}$ and $u_{2m+n_2}u_{2m+n_2-1}$, then, by Corollary \ref{cor:2pos}, there are two possibilities. The first one is that there is an isometric cycle $C_{1} = (z_kz_{k+1}\ldots z_{l}w_{l-1}w_{l-2}\ldots w_{k+1})$, for some $0\leq k<l-1\leq n-1$ and some vertices $w_{l-1},w_{l-2},\ldots, w_{k+1}$ not on $P_1$. By Claim \ref{claim:1}, this is not possible. Hence the second possibility must hold: there exists a traverse $T'$ from $v_{2m+n_1}v_{2m+n_1-1}$ to $u_{2m+n_2}u_{2m+n_2-1}$, such that $P_1$ is the $v_{2m+n_1-1},u_{2m+n_2-1}$-side of $T'$. Let $P_0$ be the $v_{2m+n_1},u_{2m+n_2}$-side of $T'$.

Let $C'$, resp., $C''$ be the first isometric cycle on the traverse $T'$ from $v_{2m+n_1}v_{2m+n_1-1}$ to $u_{2m+n_2}u_{2m+n_2-1}$, resp.,  on the traverse $T$ from $v_{2m+n_1-1}v_{2m+n_1-2}$ to $u_{2m+n_2-1}u_{2m+n_2-2}$. Let $C'$ be of length $2k_1+2$, and $C''$ of length $2 k_2+2$. Without loss of generality, assume that $k_1\leq k_2$. Then $C'$ and $C''$ are isometric cycles having $k_1$ edges in common (which is at least 2, since $g(G)\geq 6$). Moreover, vertices of $C'$ that do not lie on $P_1$ are in $W_{v_{2m+n_1}v_{2m+n_1-1}}$, thus they do not intersect with $C''$. This implies that $C'$ and $C''$ intertwine. Using the fact that $k_2\leq n \leq n_1+n_2$ and $k_1\geq 2$, we get
$$i(C',C'')=(2k_1+2+2k_2+2-4k_1)/2=k_2-k_1+2\leq k_2 \leq n_1+n_2.$$
By the minimality condition, the above expression is an equality. This implies that $k_1=2$, i.e.~$C'$ is a 6-cycle. It also implies that $k_2=n_1+n_2$, i.e.~$C''$ is a $(2n_1+2n_2+2)$-cycle and thus $C''$ is the whole traverse $T$. Since the cycles $C'$ and $C''$ are again two isometric cycles that intertwine and have the minimal residue of intertwining, we can, without loss of generality, assume that the cycles $(v_0v_1\ldots v_mv_{m+1}\ldots v_{2m+2n_1-1})$ and $(u_0u_1\ldots u_mu_{m+1}\ldots u_{2m+2n_2-1})$ that we have started with are a 6-cycle and a $(2n_1+2n_2+2)$-cycle, respectively, i.e.~$n_1=1$ and $m=2$. 

Since the distance on $C''$ from $v_{2m+n_1-1} (=v_5)$ to $u_{2m+n_2-1}$ is $n_1+n_2=1+n_2$, we see that the length of $P_1$ is $1+n_2$ and thus also the length of $P_0$ and $P_2$ is $1+n_2$. The path $v_5v_0u_{2n_2+2m-1}\ldots u_{n_2+2m}$ has length $n_2+1$, thus it is a shortest path. If it is different from $P_0$, then, by Lemma \ref{lem:pastecycle}, we have an isometric cycle $E_1=(y_{i_1}y_{i_1+1}\ldots y_{i_2} y'_{i_2-1} \ldots y'_{i_1+1})$ for some vertices  $y_{i_1},\ldots,y_{i_2}$ on $P_0$ and some $y'_{i_1+1},\ldots,y'_{i_2-1}$ not on $P_0$. By Claim \ref{claim:1} this is not possible. Thus $P_0=v_5v_0u_{2n_2+2m-1}\ldots u_{n_2+2m}$.

Similarly,  the path $v_3v_2u_{3}\ldots u_{n_2+2}$ has length $1+n_2$, thus it is a shortest path. As above, if it is different from $P_2$, then Lemma \ref{lem:pastecycle} and Claim \ref{claim:1} give a contradiction. Thus $P_2=v_3v_2u_{3}\ldots u_{n_2+2}$. The cycle $(v_0v_1v_2v_3v_4v_5)$ is isometric, thus we have $v_0v_5 \Theta v_2v_3$. Since $C'$ and $C''$ are isometric, we have $v_0v_5 \Theta z_{1}z_{2}$ and $v_2v_3 \Theta z_{n_2}z_{n_2+1}$, thus $z_1z_2 \Theta z_{n_1}z_{n_1+1}$. But $z_1z_2$ and $z_{n_1}z_{n_2+1}$ lie on a shortest path $P_1$, thus $z_1z_2=z_{n_1}z_{n_1+1}$, i.e.~$n_2=1$. Therefore, also the cycle on $u_0,\ldots ,u_{2n_1+2m-1}$ is a 6-cycle. 

Consider the graph $H$ induced  on vertices $v_0,v_1,\ldots, v_{5},u_3,u_4,u_5, z_1$. We claim that $H$ is isomorphic to $X$. Since in $G$ we have the cycle $(v_0v_1v_2v_3v_4 v_{5})$, the cycle $(v_0v_1v_2u_3u_4 u_{5})$, and the path $P_1=v_3z_1u_3$, we see that $X$ is isomorphic to a spanning subgraph of $H$. But no additional edge can exist in $H$, since $g(G)\geq 6$. Finally we prove that $H$ is isometric. To prove this it is enough to check that for each pair of vertices $a,b\in V(H)$, there exists an $a,b$-path in $H$ that has all its edges in pairwise different $\Theta$-classes in $G$, i.e. it is a shortest path in $G$. If both $a,b$ lie in one of the isometric cycles $(v_0v_1\ldots v_{5})$, $(v_0v_1v_2u_3u_4 u_{5})$, $C'=(v_5v_4z_1u_4u_5v_0)$, or $C''=(v_3v_4z_1u_4u_3v_2)$, then this holds. The remaining pairs are $(v_5,u_3)$, $(u_5,v_3)$, and $(z_1,v_1)$ (see Figure 1). For the pair $(v_5,u_3)$, the path $v_5v_4v_3v_2u_3$ has its first three edges on a common convex cycle, so these edges are pairwise in different $\Theta$-classes. Moreover $v_2u_3$ is not in $F_{v_2v_3}$ since it is incident with $v_2v_3$ and not in $F_{v_3v_4}$ or $F_{v_4v_5}$ since it lies on a convex cycle with edges in these classes. For the remaining two pairs the situation is symmetric.
\end{proof}

Denote with
$$C(G)=\{ C \mid C \textrm{ is a convex cycle in } G\}.$$
The \emph{convex excess} of a graph $G$ was introduced in \cite{klavvzar2012convex} as
$$ce(G)=\sum_{C\in C(G)} \frac{|C|-4}{2},$$ 
and the following “Euler-type” formula was proved for partial cubes:
$$2n(G)-m(G)-i(G)-ce(G)\leq 2,$$
where $i(G)$ denotes the isometric dimension of $G$ (i.e., the number of $\Theta$-classes in $G$), $n(G)$ the number of vertices in $G$ and $m(G)$ the number of edges in $G$.
Moreover, the equality in the formula holds if and only if $G$ is a tree-zone graph. The  next result shows that there are many tree-zone partial cubes.

\begin{cor}
Every partial cube $G$ with $g(G)>6$ is a tree-zone graph and hence it holds $2n(G)-m(G)-i(G)-ce(G)= 2$.
\end{cor}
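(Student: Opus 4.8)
The plan is to reduce everything to showing that each zone graph $Z_F$ is a tree; the displayed ``Euler-type'' equality then follows immediately from the quoted characterization (equality holds if and only if $G$ is tree-zone). Fix a $\Theta$-class $F=F_{uv}$. Connectivity of $Z_F$ comes for free from Lemma~\ref{lem:cycles}: given two edges $f,f'\in F$, a convex traverse from $f$ to $f'$ is a sequence of convex cycles in which consecutive cycles share a single edge of $F$, and each such cycle crosses $F$ in exactly its two antipodal $F$-edges; reading off these shared edges yields a walk in $Z_F$ from $f$ to $f'$. Note also that, by Proposition~\ref{prop:girth2}, two distinct convex cycles cannot share two edges of $F$, so distinct convex cycles crossing $F$ correspond to distinct edges of $Z_F$.

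It remains to prove that $Z_F$ is acyclic. Suppose not, and choose a shortest cycle in $Z_F$, with vertices $g_0,g_1,\dots,g_{k-1}\in F$ ($k\ge 3$), where each consecutive pair $g_i,g_{i+1}$ (indices mod $k$) lies on a common convex cycle $C_i$ crossing $F$ exactly in $\{g_i,g_{i+1}\}$. By Proposition~\ref{prop:girth2} the cycles $C_i$ and $C_{i+1}$ meet in exactly the shared edge $g_{i+1}$. Write $g_i=a_ib_i$ with $a_i\in W_{uv}$; then the two sides of $C_i$ are geodesics $P_i$ from $a_i$ to $a_{i+1}$ in $W_{uv}$ and $Q_i$ from $b_i$ to $b_{i+1}$ in $W_{vu}$, and $P_i,Q_i$ cross exactly the same $\Theta$-classes (antipodal edges of an isometric cycle are $\Theta$-related). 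Since $g(G)>6$ and $G$ is bipartite, each $C_i$ has length at least $8$, so each side has length at least $3$.

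Next I would glue consecutive cycles along their shared $F$-edges. Deleting the shared edge $g_1$ from $C_0\cup C_1$ leaves a cycle crossing $F$ only in $g_0,g_2$, whose $W_{uv}$-side is $P_0P_1$; iterating, I obtain a cycle $\Gamma$ crossing $F$ in exactly $g_0$ and $g_{k-1}$, with sides $P_0P_1\cdots P_{k-2}$ and $Q_0Q_1\cdots Q_{k-2}$. If $\Gamma$ is isometric, then $\Gamma$ and $C_{k-1}$ are two distinct isometric cycles both crossing $F$ precisely in the two non-adjacent edges $g_0$ and $g_{k-1}$, so they share two edges --- contradicting Proposition~\ref{prop:girth2}. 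Thus acyclicity reduces to the isometry of $\Gamma$.

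The main obstacle is exactly this isometry. Because the two $F$-edges lie in one class and the two sides cross the same classes, $\Gamma$ is isometric if and only if each side is a geodesic, i.e. if and only if the sets of $\Theta$-classes crossed by $P_0,\dots,P_{k-2}$ are pairwise disjoint; equivalently, no $\Theta$-class other than $F$ is crossed by two of the cycles $C_i$. This is precisely where the hypothesis $g(G)>6$ must enter: it fails for $Q_3$ (girth $4$), whose zone graphs are $4$-cycles, so it cannot follow from Proposition~\ref{prop:girth2} alone. I expect the resolution to come from minimality together with Lemma~\ref{lem:pastecycle} and Corollary~\ref{cor:2pos}: if some class were reused along the side $P_0\cdots P_{k-2}$, then analyzing the first repetition and pasting in the resulting convex cycle should produce either a chord of the chosen $Z_F$-cycle (contradicting minimality of $k$) or two intertwining isometric cycles (contradicting Proposition~\ref{prop:girth2}), the point being that girth more than $6$ forces the sides to be long enough that such a repetition genuinely creates one of these forbidden configurations.
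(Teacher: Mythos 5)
Your reduction is sound as far as it goes: connectivity of $Z_F$ from Lemma~\ref{lem:cycles}, the gluing of $C_0,\dots,C_{k-2}$ into a cycle $\Gamma$ crossing $F$ only in $g_0$ and $g_{k-1}$, and the observation that an \emph{isometric} $\Gamma$ together with $C_{k-1}$ would violate Proposition~\ref{prop:girth2} (two isometric cycles sharing the two non-adjacent edges $g_0,g_{k-1}$) are all correct. But the proof has a genuine gap, and you name it yourself: the isometry of $\Gamma$, equivalently the claim that no $\Theta$-class other than $F$ is reused along $P_0P_1\cdots P_{k-2}$, is only ``expected'' to follow from minimality of $k$ plus Lemma~\ref{lem:pastecycle} and Corollary~\ref{cor:2pos}. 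That claim is not a technical detail; it is the entire mathematical content of the statement (as your $Q_3$ example shows, it is exactly where $g(G)>6$ must do work), and the sketched dichotomy ``chord of the $Z_F$-cycle or two intertwining cycles'' is not backed by any argument for why a repeated class produces either configuration.

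For comparison, here is how the paper closes precisely this gap. It works with the closed walk $R$ in $W_{uv}$ formed by the $W_{uv}$-sides of the $C_i$, extracts a cycle from it, and picks two $\Theta$-equivalent edges $u_{i_1}u_{i_1+1}\Theta u_{i_2}u_{i_2+1}$ on that cycle with $i_2-i_1$ \emph{minimal}; minimality makes the path $P$ between them a geodesic, so Corollary~\ref{cor:2pos} applies and yields either a pasted convex cycle $D$ on $P$ or a convex traverse from $u_{i_1}u_{i_1+1}$ to $u_{i_2}u_{i_2+1}$. The new cycles cross no edge of $F$ (their edges are on $P\subseteq W_{uv}$, in the class of $u_{i_1}u_{i_1+1}$, or $\Theta$-antipodal to edges of $P$), so they are distinct from every $C_i$. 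Finally a counting argument uses the girth: since $g(G)>6$, every arc of $R$ contributed by a single $C_i$ has length at least $3$, while $D$ has at least $4$ and each traverse cycle at least $3$ consecutive edges on $R$; by pigeonhole such a cycle shares at least $2$ edges with a single $C_i$, contradicting Proposition~\ref{prop:girth2}. Some argument of this kind (repetition $\Rightarrow$ geodesic $\Rightarrow$ Corollary~\ref{cor:2pos} $\Rightarrow$ new cycle forced to overlap an old one in two edges) is what your proposal must supply before the final contradiction can be invoked; without it, the proof is incomplete.
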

\begin{proof}
Let $uv\in E(G)$, and let $Z_{F_{uv}}$ be the $F_{uv}$-zone graph. Assume that we have a cycle in $Z_{F_{uv}}$. Then let $C_0,\ldots,C_{j}$ be a sequence of convex cycles for which $C_i$ and $C_{i+1}$ intersect in an edge from $F_{uv}$, where $i\in \{0,\ldots,j\}$, and $i+1$ is calculated in $\Z_{j+1}$. By Proposition \ref{prop:girth2}, pairs $C_i$ and $C_{i+1}$ intersect in exactly one edge. For  $i\in \{0,\ldots,j\}$, let $w_0^i,w_1^i,\ldots,w_{j_i}^i$ be vertices of $C_i$   that lie $W_{uv}$. Then $R=w_0^0w_1^0 \ldots w_{j_0}^0w_1^1w_2^1 \ldots w_{j_{j}}^{j}$ is a closed walk. Since two consecutive cycles $C_i,C_{i+1}$ share only an edge and that edge is from $F_{uv}$, we see that a sub-sequence (an interval) of $R$ forms a cycle. Let
$(u_1u_2\ldots u_k)$ be that cycle, for some $u_1,\ldots,u_{k_1}$ on $C_p$, $u_{k_1},\ldots,u_{k_1+k_2}$ on $C_{p+1}$, \ldots, and $u_{k-k_l},\ldots,u_{k}$ on $C_{p+l}$ for some $0\leq p<p+l\leq j$.

%

Let $u_{i_1}u_{i_1+1}\Theta  u_{i_2}u_{i_2+1}$ be  two edges on the cycle with $1\leq i_1<i_2<k$ and with $i_2-i_1$ as small as possible. The path $P=u_{i_1+1}\ldots u_{i_2}$ is a shortest path, since all the edges on $P$ are in pairwise different $\Theta$-classes. The latter holds: if two edges $u_{j_1}u_{j_1+1},u_{j_2}u_{j_2+1}$ on $P$ were in the same $\Theta$-class, we would have $j_2-j_1<i_2-i_1$.

 By Corollary \ref{cor:2pos}, either $P$ is the $u_{i_1+1}, u_{i_2}$-side of a traverse from $u_{i_1}u_{i_1+1}$ to $u_{i_2}u_{i_2+1}$, or there is an isometric cycle $D$ of the form $(u_{k_1},\ldots, u_{k_2},w_{k_2-1},\ldots,w_{k_1+1})$ for some $u_{k_1},\ldots, u_{k_2}$ on $P$. Since all the cycles $\{C_i; 0\leq i \leq j \}$, have an edge in $F_{uv}$, the cycle $D$ or the cycles of a traverse from $u_{i_1}u_{i_1+1}$ to $u_{i_2}u_{i_2+1}$ (whichever exists) are different from cycles $\{C_i; 0\leq i \leq j \}$.

Since $g(G)>6$, each isometric cycle $C_i$, $i\in \{0,1,\ldots,j\}$ has at least three consecutive edges on the closed walk $R$. If there exists the isometric cycle $D$, it has length at least 8, hence this cycle has at least 4 consecutive edges on $R$. Then it must share at least 2 edges with some $C_i$, $i\in \{0,1,\ldots,j\}$, which is a contradiction with Proposition \ref{prop:girth2}. On the other hand, if $P$ is a side of a traverse with isometric cycles of length at least 8, then each of this cycles has at least 3 consecutive edges on $R$ and it must share at least 2 edges with some $C_i$, $i\in \{0,1,\ldots,j\}$. A contradiction with Proposition \ref{prop:girth2}.

We have proven that no cycle exists in the $F_{uv}$-zone graph. Since $uv$ was arbitrary, the latter holds for all zone graphs of $G$.
\end{proof}

We notice that for the computation of $i(G)$ efficient algorithms have been developed, see \cite{cheng2012poset}.
To prove the main result of this paper, we will need the following:

\begin{lem}\label{lem:path}
Let  $g(G)>6$ for a partial cube $G$. If $u_1v_1 \Theta u_2v_2$ with $u_2 \in U_{u_1v_1}$, $P_1$ being a shortest $u_1u_2$-path, and $P_2$ being a shortest $v_1v_2$-path, then $P_1$ and $P_2$ are the sides of the unique traverse from $u_1v_1$ to $u_2v_2$.
\end{lem}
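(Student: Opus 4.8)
The plan is to first produce a traverse and read off its combinatorial structure, then show that the shortest $u_1,u_2$-path and the shortest $v_1,v_2$-path are each \emph{unique}, and finally deduce that the traverse itself is unique.

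\emph{Existence and structure.} By Lemma \ref{lem:cycles} there is a convex traverse $T$ from $u_1v_1$ to $u_2v_2$. Write its $u_1,u_2$-side as $Q_1=x_0x_1\cdots x_k$ (with $x_0=u_1$, $x_k=u_2$) and its $v_1,v_2$-side as $Q_2=y_0y_1\cdots y_k$ (with $y_0=v_1$, $y_k=v_2$), where $k=d(u_1,u_2)=d(v_1,v_2)$, and let $D_1,\dots,D_n$ be its convex cycles. Each $D_t$ meets $F_{u_1v_1}$ in two antipodal edges (the rungs); since such a rung lies in $F_{u_1v_1}$ with its $W_{u_1v_1}$-endpoint in $U_{u_1v_1}$, the distance relation for $\Theta$ forces it to be of the form $x_ay_a$ for a common index $a$ (as $d(u_1,x_a)=a=d(v_1,y_a)$). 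Hence $D_t$ is built from an arc $x_a\cdots x_b$ of $Q_1$, the arc $y_a\cdots y_b$ of $Q_2$, and the two rungs $x_ay_a$, $x_by_b$. As $G$ is bipartite with $g(G)>6$, every convex cycle has length at least $8$, so $b-a\ge 3$: the $u$-arcs of $D_1,\dots,D_n$ partition $Q_1$ into consecutive subpaths of length at least $3$ (sharing endpoints), and likewise for the $v$-arcs on $Q_2$.

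\emph{Uniqueness of the sides.} I claim $Q_1$ is the only shortest $u_1,u_2$-path. If not, applying Lemma \ref{lem:pastecycle} to the geodesic $Q_1$ together with a second shortest $u_1,u_2$-path yields a convex cycle $C$ sharing a subpath $x_i\cdots x_j$ of $Q_1$ (of length $j-i\ge 4$, since $|C|\ge 8$) with its remaining vertices off $Q_1$. Both shortest paths lie in the convex set $W_{u_1v_1}$, so all of $C$ lies in $W_{u_1v_1}$, whereas every $D_t$ has vertices in $W_{v_1u_1}$; hence $C\neq D_t$ for all $t$. Because the $u$-arcs cut $Q_1$ into consecutive subpaths of length at least $3$, a subpath of length at least $4$ must overlap one single $u$-arc in at least two edges, so $C$ and that $D_t$ meet in at least two edges. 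This contradicts Proposition \ref{prop:girth2}. Therefore $Q_1$ is the unique shortest $u_1,u_2$-path, so $P_1=Q_1$, and symmetrically $P_2=Q_2$.

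\emph{Uniqueness of the traverse.} Let $T'$ be any traverse from $u_1v_1$ to $u_2v_2$. By definition its $u$- and $v$-sides are shortest $u_1,u_2$- and $v_1,v_2$-paths, hence equal to $Q_1$ and $Q_2$. As above, each rung of $T'$ has the form $x_py_p$. I claim the rung indices of $T'$ are exactly those $p$ with $x_py_p\in E(G)$: every such edge must be a rung, for if $x_py_p$ had $p$ strictly inside the index range of some cycle $D'_t$ of $T'$, it would be a chord of $D'_t$, impossible since an isometric cycle is chordless. As the set $\{p:\,x_py_p\in E(G)\}$ depends only on $G$ and the now-fixed paths $Q_1,Q_2$, every traverse has the same rungs and hence the same cycles. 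Thus the traverse is unique, and $P_1,P_2$ are exactly its two sides.

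The main obstacle is the reconciliation in the middle step: an arbitrary shortest path $P_1$ need not a priori be the side $Q_1$ of the traverse handed to us by Lemma \ref{lem:cycles}, so the cycle $C$ manufactured by Lemma \ref{lem:pastecycle} can only be compared with the cycles $D_t$ once uniqueness of shortest paths is in hand. Securing that uniqueness—by confining $C$ to $W_{u_1v_1}$ and forcing a two-edge overlap with a traverse cycle, which girth greater than $6$ forbids—is the crux; the chordlessness argument then makes uniqueness of the traverse routine.
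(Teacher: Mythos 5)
Your proof is correct and follows essentially the same route as the paper's: existence via Lemma \ref{lem:cycles}, then uniqueness of the sides by applying Lemma \ref{lem:pastecycle} to a hypothetical second geodesic and forcing, via the girth bound and a pigeonhole on the traverse's arcs, a two-edge intersection of isometric cycles that contradicts Proposition \ref{prop:girth2}. In fact you supply two details the paper leaves implicit---that the cycle $C$ lies in $W_{u_1v_1}$ and is therefore distinct from every traverse cycle, and the chordlessness argument showing two traverses cannot share the same sides---so your write-up is, if anything, more complete than the original.
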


\begin{proof}
Let $P_1$ be a shortest $u_1u_2$-path, and let $R_1$ be the $u_1,u_2$-sides of some traverse $T$ from $u_1v_1$ to $u_2v_2$, provided by Lemma \ref{lem:cycles}. For the sake of contradiction, assume that $R_1\neq P_1$. By Lemma \ref{lem:pastecycle}, there exists an isometric cycle $C=(z_k\ldots z_{k+l} w_{k+l-1}\ldots, w_{k+1})$, where $z_k,\ldots, z_{k+l}$ are vertices on $R_1$ and $w_{k+l-1},\ldots, w_{k+1}$ are some other vertices. Since $g(G)>6$, the length of $C$ is at least 8, thous it has at least 4 consecutive edges on $R_1$. The length of the isometric cycles on $T$ is also at least 8, thus each has at least 3 consecutive edges on $R_1$. Hence there are two isometric cycles, namely $C$ and one of the isometric cycles on $T$, that have at least two edges in common. This is a contradiction with Proposition \ref{prop:girth2}.

We have proved that $R_1$ is the only shortest $u_1u_2$-path, and, similarly, the $v_1,v_2$-side of $T$ is the only shortest $v_1v_2$-path. Since it is impossible that two traverses have the same sides, this also proves the uniqueness of the traverse.
\end{proof}

We are now ready for our main result.
In the proof  we will use a rooted tree $T$ with root $v$. For every vertex $u\in V(T)$, we will denote the $v,u$-path in $T$ by $P_u$, and with $A_u$ the set of all the edges in $T$ that have exactly one endpoint in $V(P_u)\setminus\{u\}$.

\begin{thm}\label{thm:delta}
Every partial cube $G$ with $g(G)>6$ and $\delta(G)\geq 3$ contains an infinite  subtree in which vertices have degree 3 or 2. Moreover, any two vertices of degree 2 have distance at least 2. In particular, $G$ is infinite with exponential growth. 
\end{thm}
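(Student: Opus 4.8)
The plan is to construct the desired subtree greedily by growing a convex traverse structure outward, using the minimum-degree hypothesis at each step to guarantee that branching continues indefinitely. First I would fix an arbitrary edge $u_1v_1$ and, since $\delta(G)\geq 3$, observe that the vertex $u_1$ has at least two neighbors besides $v_1$; each edge incident to $u_1$ lies in some $\Theta$-class, and by Lemma~\ref{lem:cycles} and Lemma~\ref{lem:path} each such edge determines a unique convex traverse. The key structural input is that, because $g(G)>6$, Proposition~\ref{prop:girth2} forces any two isometric (hence convex) cycles to meet in at most one edge or one vertex, so the traverses emanating from $u_1$ cannot collapse onto each other. This is what lets the branches remain genuinely distinct as we grow them.

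The heart of the argument is a local branching lemma: starting from a vertex $w$ reached along a side of a traverse, I would show that $w$ (or a nearby vertex) again has degree at least 3 forcing a new traverse to branch off, and that this branching introduces at least one new vertex at a strictly greater distance from the root $v$ that has not appeared before. Concretely, I would root the tree at $v$ and, for a vertex $u$ already placed, use the notation $P_u$ and $A_u$ from the setup: the edges in $A_u$ record the $\Theta$-classes already ``used up'' along the path from the root to $u$. Since $\deg(u)\geq 3$, there is an edge at $u$ whose $\Theta$-class is not in $A_u$ and which is not the edge of $P_u$ incident to $u$; following the unique traverse in that direction (Lemma~\ref{lem:path}) yields a new vertex strictly farther from $v$. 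The girth condition, via Proposition~\ref{prop:girth2}, guarantees this new vertex is distinct from all previously placed vertices, because any coincidence would force two convex cycles to share two edges.

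To get the degree-2-spacing claim, I would argue that whenever a vertex $u$ of the subtree has degree exactly 2 in the subtree, its two incident tree-edges lie on a common convex cycle (the traverse forced its neighbor to continue), and then the next vertex along must have a third branch available, again by $\delta(G)\geq 3$ combined with the fact that a convex cycle of length $\geq 8$ cannot reuse a $\Theta$-class locally. This shows no two degree-2 vertices can be adjacent, giving distance at least 2. Finally, exponential growth and infiniteness follow immediately once the subtree is shown to be infinite with minimum degree structure forcing branching: a tree in which vertices have degree 3 or 2, with degree-2 vertices pairwise at distance $\geq 2$, contains a binary-tree-like skeleton after contracting the degree-2 vertices, hence $B_d(v)$ grows at least like $2^{d/c}$ for a constant $c$.

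I expect the main obstacle to be the branching step: proving rigorously that at each newly reached vertex a \emph{new} $\Theta$-class is available and that following it produces a vertex genuinely outside the already-constructed tree, rather than looping back. The delicate point is controlling the interaction between the newly grown traverse and the edges recorded in $A_u$; the argument must show that the edge chosen at $u$ lies in a $\Theta$-class disjoint from those along $P_u$, so that the resulting traverse cannot fold back onto $P_u$. This is exactly where Proposition~\ref{prop:girth2} and the uniqueness of traverses from Lemma~\ref{lem:path} must be combined carefully, since a failure of distinctness would manifest as two convex cycles sharing two or more edges, contradicting the girth hypothesis.
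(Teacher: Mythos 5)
Your plan coincides in outline with the paper's proof (inductive construction of the tree rooted at $v$, the bookkeeping sets $P_u$ and $A_u$, uniqueness of traverses via Lemma~\ref{lem:path}, and Proposition~\ref{prop:girth2} as the tool preventing collapse), but the step that carries the entire theorem is asserted rather than proved. You write: ``Since $\deg(u)\geq 3$, there is an edge at $u$ whose $\Theta$-class is not in $A_u$ and which is not the edge of $P_u$ incident to $u$.'' This does not follow from the degree hypothesis; with $\deg(u)=3$, one neighbor is the tree-parent $u_{-1}$ and \emph{both} remaining edges $uu_1,uu_2$ could a priori be in relation $\Theta$ with edges of $P_u\cup A_u$. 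The actual argument in the paper is: if $uu_1$ is $\Theta$-related to some $ab$ on $P_u$ or in $A_u$, then by Lemma~\ref{lem:path} the $a,u$-path in the tree is a side of the (unique) traverse from $uu_1$ to $ab$, and since $g(G)\geq 8$ the first isometric cycle of that traverse contains the three tree-edges $uu_{-1}$, $u_{-1}u_{-2}$, $u_{-2}u_{-3}$; if $uu_2$ were also blocked, a second, distinct isometric cycle would contain the same three edges, and two isometric cycles sharing at least two edges contradict Proposition~\ref{prop:girth2}. Note also what this argument actually yields: only \emph{one} guaranteed free edge at $u$, so $u$ may end up with degree 2 in the tree; the theorem's tolerance of degree-2 vertices, and the separate follow-up step showing that the next vertex $u_2$ admits \emph{two} free edges (a blocked edge at $u_2$ would create an isometric cycle through $u_2uu_{-1}u_{-2}$ sharing two edges with the first cycle $C'$), are forced by this. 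Your sketch of the spacing claim gestures at this (``a convex cycle of length $\geq 8$ cannot reuse a $\Theta$-class locally'') but never turns it into the two-cycles-sharing-two-edges contradiction that makes it work. You yourself flag the branching step as ``the main obstacle''; that obstacle is precisely the theorem's content, and leaving it open leaves the proof open.

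A second, smaller gap: your mechanism for why the newly reached vertex lies outside the already-built tree (``any coincidence would force two convex cycles to share two edges'') is not the right argument and is hard to make precise, since the hypothetical cycle created in the tree has no a priori relation to convex cycles. The paper instead uses a $\Theta$-cut argument: if adding $uu_1$ closed a cycle $C$ with $T_n$, then $C$ contains an edge $ab\in A_u$, and since $F_{ab}$ is a cut, $C$ must contain a second edge of $F_{ab}$; but the induction invariants (tree paths from the root are geodesics, hence have pairwise distinct $\Theta$-classes; no edge of the $a,u$-path is $\Theta$-related to $ab$; and $uu_1$ is not $\Theta$-related to $ab$ by assumption) rule out every candidate, a contradiction. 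Maintaining these invariants through the induction is exactly what the sets $A_u$ are for, and your proposal never states them as invariants to be preserved, which is what would be needed to make the ``no folding back'' claim rigorous.
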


\begin{proof}
We will inductively build a claimed tree $T$. We will use a stronger induction hypothesis: We will assume that we have built a subtree $T_n$ such that all leafs have distance at least $n$ from the root, its vertices have degree at most 3, and any two vertices of degree 2 are at distance at least 2. Moreover, we will assume that vertices adjacent to leafs have degree 3, $v,u$-paths in $T$, for arbitrary $u\in V(T)$, are shortest paths in $G$, and for a fixed edge $wz \in E(T_n)$ the edges in $A_w$ are not in relation $\Theta$ with $wz$.
For the induction basis $T_1$ we can take an arbitrary root $v\in V(G)$ and three incident edges.

Now assume that we have built a subtree $T_n$ that satisfies the induction hypothesis. 
Pick any leaf $u$ of $T_n$, and let $u_{-1}$ be the neighbor of $u$ on $T_n$, and $P_u, A_u$ as defined before the theorem. Since  $\delta(G)\geq 3$, there are at least two neighbors of $u$ in $G$, distinct from  $u_{-1}$. Denote them with $u_1,u_2$. Assume that none of the edges $uu_1, uu_2$ is in relation $\Theta$ with an edge on $P_u$ or an edge in $A_u$. Then we extend $T_n$ with $uu_1$ and $uu_2$. Let us prove that we obtain a tree that satisfies the induction hypothesis. 

Since $uu_1$ and $uu_2$ are not in relation $\Theta$ with any edge on $P_u$, the $v,u_1$- and $v,u_2$-path in the tree are shortest paths in $G$. We have to check that $u_1$ or $u_2$ are not vertices of $T_n$, since in this case we would have obtained a cycle by adding edges $uu_1$ and $uu_2$. If $u_1$ is already on $T_n$, then denote with $C$ the obtained cycle. Let $ab$ be the edge on $C$ that is in $A_u$. By the definition of $A_u$, $ab\neq uu_1$ and $ab$ has exactly one endpoint on $P_u$, say $a$ is on $P_u$. Then there is at least one another edge on $C$ which is in $F_{ab}$. By induction assumption, all the edges on the $a,u_1$-path on $T_n$ (the path in the non-extended tree) are in different $\Theta$ classes since this path is a shortest path in $G$. Moreover, all the edges on the $a,u$-path in $T_n$ are not in relation $\Theta$ with edges of $A_u$, by induction assumption, in particular, none of them is in relation $\Theta$ with $ab$. Also, $uu_1$ is not in relation $\Theta$ with $ab$, by our assumption. A contradiction. Similarly, we prove that  $u_2\notin V(T_n)$. All the other induction assumptions are trivially satisfied. We have proved, that in this case we can extend $T_n$ with edges $uu_1$ and $uu_2$.

Now assume that $uu_1$ is in relation $\Theta$ with an edge $ab$ on $P_u$ or in $A_u$ (with $a$ closer to $u$ than $b$). In both cases, by Lemma  \ref{lem:path}, the $a,u$-path on $T_n$ is a side of the traverse from $uu_1$ to $ab$. The letter implies that $u$ is at distance at least 3 from the root $v$, since $g(G)\geq 8$.  Let $u_{-2}$ and $u_{-3}$ be the third last and forth last vertices on $P_u$, respectively. Since the girth of $G$ is at least 8, the path $uu_{-1}u_{-2}u_{-3}$ lies on an isometric cycle $C'$, the first isometric cycle of the traverse from $uu_1$ to $ab$. If also $uu_2$ is in relation $\Theta$ with an edge on $P_u$ or in $A_u$, the path $uu_{-1}u_{-2}u_{-3}$ would lie on another isometric cycle, which is a contradiction with Proposition \ref{prop:girth2}. Thus we can extend $T_n$ with $uu_2$, and obtain a subtree $T_{n}'$, which satisfies all the induction assumptions, apart from the assumption that vertices adjacent to leafs have degree 3.

We can extend $T_{n}'$ a bit more. Denote with $u_3,u_4$ two neighbors of $u_2$ in $G$ distinct from $u$. If none of the edges $u_2u_3, u_2u_4$ is in relation $\Theta$ with an edge on $P_{u_2}$ or an edge in $A_{u_2}$, we can extend $T'_{n}$ with both of them to obtain a subtree that satisfies the induction hypothesis, by the same arguments as before. On the other hand, if $uu_3$ is in relation $\Theta$ with an edge on $P_{u_2}$ or an edge in $A_{u_2}$, then path $u_{2}uu_{-1}u_{-2}$ lies on an isometric cycle. This cycle is clearly distinct from $C'$, but they share more than an edge. A contradiction with Proposition \ref{prop:girth2}. Thus, we can always extend $T_{n}'$.

We can extend in this way all the leafs in $T_n$ with distance less than $n+1$ from the root and obtain a tree $T_{n+1}$. By induction, an infinite tree from the theorem exists. The last assertion of the theorem now easily follows.
\end{proof}

\begin{cor}\label{cor:regulargirth}
Let $G$ be a finite regular partial cube with $g(G) > 6$. Then $G$ is $K_1$, $K_2$ or an even cycle.
\end{cor}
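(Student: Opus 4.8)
The plan is to combine Theorem 1.10 with the classical facts about small partial cubes. First I would dispose of the degenerate cases: $K_1$ and $K_2$ are regular partial cubes (of degrees $0$ and $1$), and they contain no cycle at all, so the girth hypothesis $g(G)>6$ is vacuously satisfied for them. The substantive claim is that these two graphs, together with the even cycles, are the \emph{only} finite regular partial cubes of girth exceeding $6$. So assume $G$ is finite, regular of degree $r=\delta(G)=\Delta(G)$, and $g(G)>6$, and argue according to the value of $r$.

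The core of the argument is a case split on $r$. If $r\ge 3$, then $\delta(G)\ge 3$ and Theorem~\ref{thm:delta} applies directly: it forces $G$ to be infinite (indeed of exponential growth), contradicting finiteness. This is the step that does all the real work, and it is exactly what the whole paper has been building toward. It remains to handle $r\in\{0,1,2\}$. The case $r=0$ gives $G=K_1$, and $r=1$ forces $G=K_2$ (a connected $1$-regular graph is a single edge), both of which appear in the conclusion. For $r=2$, a connected $2$-regular graph is a single cycle $C_k$; since $G$ is a partial cube, $C_k$ must be isometrically embeddable into a hypercube, which happens precisely when $k$ is even, so $G$ is an even cycle. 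Here I would note that the girth condition is automatically consistent, as a sufficiently long even cycle has girth $>6$.

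The main obstacle is essentially none beyond correctly invoking Theorem~\ref{thm:delta}: the theorem is stated for the minimum degree, so the reduction is immediate once one observes that regularity makes $\delta(G)=r$. The only points requiring a line of care are the bookkeeping of the low-degree cases (ensuring $K_1$ and $K_2$ are genuinely admissible under the stated hypotheses, since their girth is undefined or infinite and hence trivially $>6$) and the standard fact that among cycles exactly the even ones are partial cubes. I would phrase the proof as: assume $G$ is finite and regular of degree $r$; if $r\ge 3$ then $\delta(G)\ge 3$ and Theorem~\ref{thm:delta} yields an infinite subtree, contradicting finiteness, so $r\le 2$; then enumerate $r=0,1,2$ to obtain $K_1$, $K_2$, and even cycles respectively, completing the proof.
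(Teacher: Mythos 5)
Your proof is correct and is exactly the argument the paper intends: the paper states this corollary without proof as an immediate consequence of Theorem~\ref{thm:delta}, and your write-up simply fills in the routine details (regularity plus finiteness rules out degree $\geq 3$ via the theorem, and the connected graphs of degree $0$, $1$, $2$ that are partial cubes are $K_1$, $K_2$, and the even cycles). No gaps; your handling of the degenerate girth convention for $K_1$ and $K_2$ and of the fact that only even cycles are partial cubes is the standard bookkeeping the paper leaves to the reader.
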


To see that the condition $g(G)>6$ in Theorem \ref{thm:delta} and Corollary \ref{cor:regulargirth} cannot be weakened, consider the following example.
Recall that the \emph{middle level} graph $M_{2n+1}$, for $n\geq 1$, is the subgraph of $Q_{2n+1}$ induced on the vertices $(i_1,\ldots,i_{2n+1})$, such that there are exactly $n$ or $n+1$ coordinates equal to 1.
In particular, $M_3$ is the cycle of length 6, while $M_5$ is known as the Desargues graph. Middle level graphs are the only distance-regular partial cubes with girth 6 \cite{weichsel1992distance}, and they show that the bound $g(G)>6$ is tight. Notice that in the case $n\geq 2$ these graphs have many isometric subgraphs isomorphic to $X$. 

One could consider partial cubes with $\delta(G)\geq 3$, $g(G)=6$, and no isometric subgraphs isomorphic to $X$. One example of such a graph is an infinite hexagonal net. It is clearly infinite with non-exponential (polynomial) growth. We know of no finite example of such a graph.


Finally, in view of Corollary \ref{cor:regulargirth}, notice that is quite easy to construct regular partial cubes of higher degrees with girth 4. If we take the Cartesian product of any two regular partial cubes, we get a regular partial cube of girth 4. Simple examples are $Q_n \B C_{2m}$, for every $n\geq 1,m\geq 2$, where $Q_n$ is a hypercube of dimension $n$. For more cubic graphs that can be used as factors, see \cite{klavvzar2007tribes}.

%

\section*{Acknowledgment}
The author wishes to express his gratitude to Sandi Klavžar who suggested to study the topic and gave useful comments on the text.

\section*{References}
\bibliographystyle{plain}
\bibliography{biblio}

\begin{thebibliography}{10}

\bibitem{Albenque2016866}
M.~Albenque and K.~Knauer.
\newblock Convexity in partial cubes: {T}he hull number.
\newblock {\em Discrete Math.}, 339:866 -- 876, 2016.

\bibitem{bandelt1984retracts}
H.-J. Bandelt.
\newblock Retracts of hypercubes.
\newblock {\em J. Graph Theory}, 8:501--510, 1984.

\bibitem{BandeltEvolu}
H.-J. Bandelt.
\newblock Evolutionary {N}etworks.
\newblock In {\em {e}{L}{S}}. John Wiley \& Sons Ltd, Chichester.
  http://www.els.net [doi: 10.1038/npg.els.0005463], 2006.

\bibitem{bandelt1983infinite}
H.-J. Bandelt and H.~M. Mulder.
\newblock Infinite median graphs, (0,2)-graphs, and hypercubes.
\newblock {\em J. Graph Theory}, 7:487--497, 1983.

\bibitem{MR1744046}
A.~Bj{\"o}rner, M.~Las~Vergnas, B.~Sturmfels, N.~White, and G.~Ziegler.
\newblock {\em Oriented {M}atroids}, volume~46 of {\em Encyclopedia of
  Mathematics and its Applications}.
\newblock Cambridge University Press, Cambridge, second edition, 1999.

\bibitem{bonnington2003cubic}
C.~P. Bonnington, S.~Klavžar, and A.~Lipovec.
\newblock On cubic and edge-critical isometric subgraphs of hypercubes.
\newblock {\em Australas. J. Combin.}, 28:217--224, 2003.

\bibitem{brevsar2004cubic}
B.~Bre{\v{s}}ar, S.~Klav{\v{z}}ar, A.~Lipovec, and B.~Mohar.
\newblock Cubic inflation, mirror graphs, regular maps, and partial cubes.
\newblock {\em European J. Combin}, 25:55--64, 2004.

\bibitem{BresarSumenjak}
B.~Brešar and T.~K. Šumenjak.
\newblock The hypergraph of $\theta$-classes and $\theta$-graphs of partial
  cubes.
\newblock {\em Ars Combin.}, 113:225--239, 2014.

\bibitem{cardinal2014covering}
J.~Cardinal and S.~Felsner.
\newblock Covering partial cubes with zones.
\newblock In {\em Lecture Notes in Comput. Sci.}, volume 8845, pages 1--13.
  Springer, 2014.

\bibitem{cheng2012poset}
C.~T. Cheng.
\newblock A poset-based approach to embedding median graphs in hypercubes and
  lattices.
\newblock {\em Order}, 29:147--163, 2012.

\bibitem{deza1997geometry}
M.~Deza and M.~Laurent.
\newblock {\em Geometry of {C}uts and {M}etrics}, volume~15 of {\em Algorithms
  and Combinatorics}.
\newblock Springer-Verlag, Berlin, 1997.

\bibitem{eppstein2006cubic}
D.~Eppstein.
\newblock Cubic partial cubes from simplicial arrangements.
\newblock {\em Electron. J. Combin}, 13:1--14, 2006.

\bibitem{eppstein2007media}
D.~Eppstein, J.-C. Falmagne, and S.~Ovchinnikov.
\newblock {\em Media {T}heory}.
\newblock Springer-Verlag, Berlin, 2008.

\bibitem{MR1210100}
K.~Fukuda and K.~Handa.
\newblock Antipodal graphs and oriented matroids.
\newblock {\em Discrete Math.}, 111:245--256, 1993.

\bibitem{Gologranc14}
T.~Gologranc.
\newblock Tree-like partial {H}amming graphs.
\newblock {\em Discuss. Math. Graph Theory}, 34:137--150, 2014.

\bibitem{graham1971addressing}
R.~L. Graham and H.~O. Pollak.
\newblock On the addressing problem for loop switching.
\newblock {\em Bell Syst. Tech. J.}, 50:2495--2519, 1971.

\bibitem{Hammack:2011a}
R.~Hammack, W.~Imrich, and S.~Klav{\v{z}}ar.
\newblock {\em Handbook of {P}roduct {G}raphs}.
\newblock Discrete Math. Appl. (Boca Raton). CRC Press, 2nd edition, 2011.

\bibitem{imrich2009transitive}
W.~Imrich and S.~Klav{\v{z}}ar.
\newblock Transitive, locally finite median graphs with finite blocks.
\newblock {\em Graphs and Combin.}, 25:81--90, 2009.

\bibitem{klavvzar2007tribes}
S.~Klav{\v{z}}ar and S.~Shpectorov.
\newblock Tribes of cubic partial cubes.
\newblock {\em Discrete Math. Theor. Comput. Sci.}, 9:273--291, 2007.

\bibitem{klavvzar2012convex}
S.~Klav{\v{z}}ar and S.~Shpectorov.
\newblock Convex excess in partial cubes.
\newblock {\em J. Graph Theory}, 69:356--369, 2012.

\bibitem{klmu}
S.~Klavžar and H.~M. Mulder.
\newblock Median graphs: characterization, location theory and related
  structures.
\newblock {\em J. Combin. Math. Combin. Comp.}, 30:103--127, 1999.

\bibitem{NadjafiKlavzar}
S.~Klavžar and M.J. Nadjafi-Arani.
\newblock Cut method: update on recent developments and equivalence of
  independent approaches.
\newblock {\em Current Organic Chemistry}, 19:348--358, 2015.

\bibitem{marc2014regular}
T.~Marc.
\newblock Regular median graphs of linear growth.
\newblock {\em Discrete Math.}, 324:1--3, 2014.

\bibitem{marc2015vertex}
T.~Marc.
\newblock Vertex-transitive median graphs of non-exponential growth.
\newblock {\em Discrete Math.}, 338:191--198, 2015.

\bibitem{mulder1980n}
H.~M. Mulder.
\newblock $n$-cubes and median graphs.
\newblock {\em J. Graph Theory}, 4:107--110, 1980.

\bibitem{ovchinnikov2008partial}
S.~Ovchinnikov.
\newblock Partial cubes: structures, characterizations, and constructions.
\newblock {\em Discrete Math.}, 308:5597--5621, 2008.

\bibitem{polat2007netlike}
N.~Polat.
\newblock Netlike partial cubes {I}. {G}eneral properties.
\newblock {\em Discrete Math.}, 307:2704--2722, 2007.

\bibitem{weizhang}
J.~Wei and H.~Zhang.
\newblock Fibonacci $(p, r)$-cubes which are partial cubes.
\newblock {\em Ars Combin.}, 115:197--209, 2014.

\bibitem{weichsel1992distance}
P.~M. Weichsel.
\newblock Distance regular subgraphs of a cube.
\newblock {\em Discrete Math.}, 109:297--306, 1992.

\bibitem{winkler1984isometric}
P.~M. Winkler.
\newblock Isometric embedding in products of complete graphs.
\newblock {\em Discrete Appl. Math.}, 7:221--225, 1984.

\end{thebibliography}

\end{document}